\documentclass[runningheads]{llncs}

\usepackage[T1]{fontenc}
\usepackage{graphicx}

\usepackage{algorithm}
\usepackage{algpseudocode}

\usepackage{listings}   
\usepackage{xcolor}     
\usepackage{hyperref}   

\usepackage{cite}

\usepackage{comment}
\usepackage{amsmath,amssymb}
\usepackage{stmaryrd}
\usepackage[colorinlistoftodos, prependcaption, textsize=footnotesize]{todonotes}

\usepackage{mathtools}
\usepackage{listings}
\usepackage{xcolor}
\usepackage[scaled=0.85]{beramono} 
\usepackage{stmaryrd}

\lstdefinelanguage{Maude}{
    morekeywords={rl,eq,ceq,crl,mod,fmod,endm,endfm,sort,sorts,subsort,subsorts,op,ops,var,vars,mb,cmb,nonexec,narrowing,metadata},
    sensitive=true,
    morecomment=[l]{---},
    morecomment=[s]{***(}{)},
    morestring=[b]"
}

\usepackage{amsmath,amssymb}
\usepackage{mathpartir} 
\usepackage{cleveref} 

\newcommand{\ms}{,\;}

\newcommand{\state}[2]{\langle\,#1\,\Vert\,#2\,\rangle}
\newcommand{\cstate}[3]{\langle\,#1\,\Vert\,#2\,\Vert\,#3\,\rangle}

\newcommand{\graph}[1]{\mathsf{graph}(#1)}
\newcommand{\status}[1]{\mathsf{phase}(#1)}
\newcommand{\freshIds}[1]{\mathsf{freshIds}(#1)}
\newcommand{\allIds}[1]{\mathsf{allIds}(#1)}
\newcommand{\missing}[2]{\mathsf{missing}(#1, #2)}
\newcommand{\refPool}[1]{\mathsf{refPool}(#1)}
\newcommand{\need}[2]{\mathsf{need}(#1, #2)}

\newcommand{\objectBuild}{\mathsf{objectBuild}}
\newcommand{\refBuild}{\mathsf{refBuild}}

\newcommand{\nil}{\mathsf{nil}}

\newcommand{\refIdx}[5]{\mathsf{refIdx}(#1,#2,#3,#4,#5)}

\newcommand{\NF}{\mathsf{NF}}

\newtheorem{assumption}{Assumption}

\newcommand{\simpB}{\mathsf{simpB}}

\newcommand{\specInvOnCreate}{\mathsf{specInvOnCreate}}
\newcommand{\specInvOnSetRef}{\mathsf{specInvOnSetRef}}

\newcommand{\initAttrs}{\mathsf{initAttrs}}

\newcommand{\setRef}{\mathsf{setRef}}

\newcommand{\choiceRef}{\mathsf{choiceRef}}

\newcommand{\Accept}{\mathit{Accept}}
\newcommand{\SAT}{\mathit{sat}}

\newcommand{\shapeInd}{\eta}

\newcommand{\confSort}{\text{\texttt{Configuration}}}

\newcommand{\phiC}{\phi}      
\newcommand{\K}{K}            
\newcommand{\R}{R}            
\newcommand{\conf}{C}            
\newcommand{\obj}{O}            
\newcommand{\os}{\mathit{OS}}          
\newcommand{\ol}{\mathit{OL}}          
\newcommand{\rt}{\mathit{RT}}          
\newcommand{\rtList}{\mathit{RTL}}     
\newcommand{\lb}{\ell}        
\newcommand{\ub}{u}           

\newcommand{\monus}{\mathbin{\ominus}}

\newcommand{\Inv}{\mathsf{Inv}}

\newcommand{\stepRel}{\;\Rightarrow\;}

\ifx\definition\undefined
\newtheorem{definition}{Definition}[section]
\fi
\ifx\theorem\undefined
\newtheorem{theorem}{Theorem}[section]
\fi
\ifx\lemma\undefined
\newtheorem{lemma}[theorem]{Lemma}
\fi
\ifx\remark\undefined
\newtheorem{remark}{Remark}[section]
\fi

\begin{document}

\title{Bounded Structural Model Finding with Symbolic Data Constraints}
\author{Artur Boronat\inst{1}\thanks{This work was supported by the European Space Agency under ESA Contract 4000133105/20/NL/AF (P-STEP project).}}
\authorrunning{Artur Boronat}

\institute{University of Leicester, Leicester, UK\\
\email{artur.boronat@leicester.ac.uk}
}

\maketitle

\begin{abstract}
Bounded model finding is a key technique for validating software designs, usually obtained by translating high-level specifications into SAT/SMT problems. Although effective, such translations introduce a semantic gap and a dependency on external tools. We present the \emph{Maude Model Finder} (MMF), a native approach that brings bounded model finding to the Maude rewriting logic framework. MMF provides a schema-parametric engine that repurposes symbolic reachability for structural solving, generating finite object configurations from class declarations and graph and data constraints without bespoke generators. Technically, MMF explores a constrained symbolic rewriting system over runtime states modulo an equational theory; SMT is used only to prune states by constraint satisfiability and to discharge entailment checks for subsumption and folding. We contribute a bounded, obligation-driven calculus that separates object creation from reference assignment and supports symmetry reduction by folding via Maude's ACU matching. We establish termination, soundness, and completeness of the bounded construction within the declared bounds, and justify folding via a coverage-preserving subsumption test. We focus on the calculus and its properties, illustrating it on running examples supported by a Maude prototype.

\keywords{Rewriting Logic \and Maude \and Bounded Model Finding \and Symbolic Rewriting \and Constrained Configurations \and SMT}
\end{abstract}

\section{Introduction}
\label{sec:introduction}

Maude is a high-performance specification and analysis environment based on rewriting logic and membership equational logic \cite{clavel_2007_all_about_maude,durn_2020_programming_symbolic_computation}. Its mainstream analysis capabilities target \emph{dynamic} properties, including reachability search and LTL model checking for finite-state systems \cite{bae_2014_rewritingbased_model_checking}, and symbolic analysis for richer state spaces, including symbolic reachability and rewriting modulo SMT \cite{meseguer_2007_symbolic_reachability_analysis,bae_2019_symbolic_state_space}. These strengths make Maude an attractive semantic platform, but they do not directly support a complementary task that is central in modelling language engineering: bounded (structural) model finding. 

Bounded model finding aims to generate finite object graphs within declared bounds that satisfy (i) class and role\footnote{A \emph{role} is a typed reference between classes with a declared multiplicity range.} multiplicities, and (ii) structural and data constraints. Such bounded instances are essential to debug metamodels and Object Constraint Language (OCL) constraints, validate design assumptions, and obtain concrete witnesses and counterexamples. Tools such as Alloy and the USE validator provide this capability through SAT/SMT-based instance generation \cite{jackson_2019_alloy_language_tool,kuhlmann_2011_extensive_validation_ocl}. In Maude, while one can execute specifications and explore behaviours, there is no out-of-the-box, schema-parametric mechanism dedicated to bounded \emph{structural} instance generation for object configurations. 
%
In our earlier work on model subtyping with OCL constraints and safe reuse, bounded witnesses were obtained via the USE validator, using its Kodkod-based back end to check satisfiability of OCL constraints \cite{BoronatSLE17,boronat_safe_2023}. This paper internalises bounded witness and counterexample generation within Maude.

 We present the \emph{Maude Model Finder} (MMF), a schema-parametric bounded model finding approach for Maude object configurations. MMF formulates bounded model construction as exploration of a constrained symbolic rewriting system over \emph{runtime states} modulo oriented equations $E$ and algebraic axioms $B$ (such as associativity, commutativity, and identity). SMT is used only as a background solver to prune states by constraint satisfiability and to discharge entailment checks needed for subsumption and folding \cite{bae_2019_symbolic_state_space}. Concretely, we contribute: (1) a bounded, obligation-driven construction calculus that separates object allocation from reference assignment and incrementally accumulates symbolic data constraints; (2) redundancy reduction via folding, combining matching modulo ACU, shape indexing, and SMT-backed entailment; and (3) meta-theoretic guarantees for the bounded calculus, including termination, soundness, and completeness within the declared bounds, and soundness of folding with respect to semantic coverage. We focus on the calculus and its properties, illustrating it on running examples supported by a Maude prototype.

The paper is organised as follows. Section~\ref{sec:mmf-running-example} presents the MMF specification style through a running example. Sections~\ref{sec:mmf-foundations} and \ref{sec:mmf-semantics} introduce the semantic foundations and the symbolic construction calculus. Section~\ref{sec:mmf-reachability} describes the reachability engine and its reduction pipeline, and Section~\ref{sec:mmf-proofs} summarises the correctness results with full proofs in Appendix~\ref{sec:app-proofs}. Section~\ref{sec:evaluation} reports a proof-of-concept evaluation, followed by related work and conclusions.

\section{Specification with the Maude Model Finder (MMF)}
\label{sec:mmf-running-example}

We illustrate MMF on a small organizational schema with classes, references, bounds, and symbolic data constraints. An MMF specification describes a bounded family of admissible object graphs in three parts: (i) a \emph{schema} declaring classes and roles, (ii) a \emph{scope} fixing finite class and role bounds, and (iii) \emph{constraints}, split into structural predicates over graphs and data invariants.

MMF generates models by \emph{constrained symbolic rewriting} over \emph{runtime states} $\state{\conf}{\phiC}$, where $\conf$ is an object-graph configuration (matched modulo ACU axioms) and $\phiC$ is an accumulated Boolean constraint over symbolic attribute variables. Exploration proceeds in two phases: an object allocation phase creates objects to meet class bounds, followed by a reference assignment phase that enumerates admissible role values within role bounds. Branches are pruned as soon as either (i) a monotonic structural error pattern is detected in the partial graph, or (ii) the constraint becomes SMT-unsatisfiable; non-monotonic global constraints are checked only on completed candidates (normal forms).

\paragraph{Schema and bounds.}
The running example has companies, employees, and projects: a \texttt{Company} has a mandatory CEO and a set of projects; an \texttt{Employee} may have a manager; a \texttt{Project} has a non-empty set of members. The schema is written in Maude's object-oriented style (Listing~\ref{lst:org-proj-schema}); role multiplicities are reflected by carrier sorts (e.g., \texttt{Oid}, \texttt{Maybe\{Oid\}}, \texttt{Set\{Oid\}}, \texttt{NeSet\{Oid\}}), while the effective branching space is fixed by explicit bounds.
Listing~\ref{lst:org-proj-bounds} declares a finite scope via a constant \texttt{boundsDecl}: \texttt{classB} fixes class cardinalities and \texttt{roleB} constrains, per role, the admissible target class and multiplicity range.

\begin{lstlisting}[language=Maude,caption={Schema for the CEO example.},label={lst:org-proj-schema}]
omod ORG-PROJ-SCHEMA is
  protecting MMF .
  class Company  | ceo : Oid , projects : Set{Oid} .
  class Project  | members : NeSet{Oid} .
  class Employee | manager : Maybe{Oid} .
endom
\end{lstlisting}

\begin{lstlisting}[language=Maude,caption={Example bounds for the CEO running example.},label={lst:org-proj-bounds}]
eq boundsDecl =
  classB(Company,  1, 1) ;
  classB(Employee, 2, 2) ;
  classB(Project,  0, 2) ;
  roleB(Company,  ceo,      Employee, 1, 1) ;
  roleB(Company,  projects, Project,  0, 2) ;
  roleB(Employee, manager,  Employee, 0, 1) ;
  roleB(Project,  members,  Employee, 1, 2) .
\end{lstlisting}

\paragraph{Structural constraints.}
MMF supports two structural error predicates.
First, \texttt{specErrorPredGraphPartial} is evaluated on \emph{partial} graphs to enable early pruning. It includes intrinsic well-formedness checks (e.g., unique object identifiers, no dangling edges, and adherence to upper bounds) and can be extended with user-defined monotonic error patterns; once triggered, such errors cannot be repaired by further refinement.
Second, \texttt{specErrorPredGraph} is evaluated only on completed candidates (normal forms), to express non-monotonic constraints that need the full graph. In this example, we forbid a manager for the CEO and require the management relation to be acyclic (Listing~\ref{lst:org-proj-graph-constraints}).
Both predicates return a value of sort \texttt{TruthPred}, whose sole constructor is \texttt{tt}. \texttt{TruthPred} is a dedicated sort distinct from Maude's \texttt{Bool}: error predicates are structurally matched against \texttt{tt} to detect violations, whereas \texttt{Bool} is reserved for SMT constraint terms that are passed to the solver. Keeping the two sorts separate avoids ambiguity when Boolean SMT formulas are manipulated as first-class terms alongside structural graph checks.

\begin{lstlisting}[language=Maude,caption={Global constraints (cycle detection helper omitted).},label={lst:org-proj-graph-constraints}]
op ceoHasManager : Configuration ~> TruthPred .
var C E1 : Oid .  var E2 : Maybe{Oid} .
var CAS AS : AttributeSet .  var CF : Configuration .
eq ceoHasManager(< C : Company | ceo : E1, CAS >
                 < E1 : Employee | manager : E2, AS > CF) = tt .
eq specErrorPredGraph(CF) =
  ceoHasManager(CF) ttor managerCyclic(CF) .
  --- ttor: short-circuit disjunction (tt-or) over TruthPred; managerCyclic: DFS cycle check (omitted)
\end{lstlisting}

\paragraph{Data invariants via constraints.}
Data-level properties are expressed by equational hooks\footnote{We use \emph{equational hook} to mean a Maude equation, named by convention, that fires on a specific construction event and returns a Boolean constraint contribution. This is distinct from Maude's built-in hook mechanism for foreign function interfaces.} that contribute to the constraint $\phiC$. The hook \texttt{specInvOnCreate} is applied when an object is created; \texttt{specInvOnSetRef} is applied when a role value is committed. In the example, each employee is given a symbolic integer \texttt{level} constrained to a finite range; managers must have strictly higher rank (smaller \texttt{level}), and the CEO must be at level 0 (Listing~\ref{lst:org-proj-hooks}). Here \texttt{i(O,\,AN)}, \texttt{b(O,\,AN)}, and \texttt{r(O,\,AN)} denote the symbolic SMT variables for the integer-, Boolean-, and real-valued attribute \texttt{AN} of object \texttt{O}, respectively (formalised in Section~\ref{sec:mmf-semantics}).

\begin{lstlisting}[language=Maude,caption={Equational constraints for the CEO example.},label={lst:org-proj-hooks}]
var E C : Oid .  var M : Maybe{Oid} .
eq specInvOnCreate(Employee, E) =
  i(E, level) >= (0).Integer and i(E, level) < (3).Integer .
ceq specInvOnSetRef(Employee, manager, E, M) =
  i(M, level) < i(E, level)
if M :: Oid .
eq specInvOnSetRef(Employee, manager, E, empty) = (true).Boolean .
eq specInvOnSetRef(Company, ceo, C, E) =
  i(E, level) === (0).Integer .
\end{lstlisting}

\paragraph{Execution modes.}
An MMF specification consists of the schema, \texttt{boundsDecl}, and the structural and data constraints above. The \texttt{find} mode enumerates valid models with \texttt{findAll}, while \texttt{findFirst} searches for a single witness:
\begin{lstlisting}[language=Maude]
reduce in MF-DRIVER-META : findAll(initModel(find)) .
\end{lstlisting}
\begin{lstlisting}[language=Maude]
reduce in MF-DRIVER-META : findFirst(initModel(find)) .
\end{lstlisting}
If \texttt{findFirst} yields no solution (or \texttt{findAll} an empty set), the specification is unsatisfiable within the declared bounds.

\paragraph{Counterexample search (\texttt{check}).}
The \texttt{check} mode reuses the same bounded construction semantics and pruning, but it searches for a counterexample instead of valid models.
A counterexample is an admissible normal form that violates a user-declared mixed property combining a graph pattern and attribute constraints.
Graph patterns are executable predicates over the graph component and, when monotonic, can be evaluated on partial graphs to prune early.
Attribute violations are searched by conjoining the negation of a constraint into \(\phiC\); a satisfiable final constraint witnesses a violating valuation.
We illustrate \texttt{check} with two small properties for the CEO example.
The first property is structural and detects a non-CEO employee without a manager (Listing~\ref{lst:org-prop-graph}).

\begin{lstlisting}[language=Maude,caption={Graph property for \texttt{check}: non-CEO employee without manager.},label={lst:org-prop-graph}]
op nonCeoWithoutManager : Configuration ~> TruthPred .
ceq nonCeoWithoutManager(< C : Company  | ceo : CEO, CAS >
     < E : Employee | manager : empty, AS > CF) = tt if E =/= CEO .
eq errorPredPropGraph(CF) = nonCeoWithoutManager(CF) .
\end{lstlisting}

Semantically, this graph property corresponds to an invariant $\Inv^{\mathsf{prop}}_{\mathsf{graph}}$.
Operationally, \texttt{check} uses \texttt{errorPredPropGraph} as an executable error predicate, which returns \texttt{tt} precisely when $\Inv^{\mathsf{prop}}_{\mathsf{graph}}$ is violated.

The second is an attribute property that negates the intended level range and searches for a satisfying violation (Listing~\ref{lst:org-prop-smt}).

\begin{lstlisting}[language=Maude,caption={Attribute-level property for \texttt{check}: level range (searched as a violation).},label={lst:org-prop-smt}]
op propInvOnCreate : ClassId Oid -> Boolean .
eq propInvOnCreate(Employee, E) =
  i(E, level) >= (0).Integer and i(E, level) < (3).Integer .
\end{lstlisting}


\section{Semantic Foundations}
\label{sec:mmf-foundations}

We separate the \emph{declarative meaning} of a specification from its
\emph{operationalization} by the MMF calculus.

\begin{definition}[Core model specification]
\label{def:core-model-spec}
 A \emph{core model specification} $\mathcal{M} = (\mathcal{S}, \Inv^{\mathsf{part}}_{\mathsf{graph}}, \Inv^{\mathsf{full}}_{\mathsf{graph}}, \Inv_{\mathsf{attr}})$ defines the validity criteria for constrained model instances.
 Here $\mathcal{S}$ is a Maude schema module declaring classes, attributes, and subclass relations; $\Inv^{\mathsf{part}}_{\mathsf{graph}}$ and $\Inv^{\mathsf{full}}_{\mathsf{graph}}$ are graph invariants for partial and completed graphs, respectively, with $\Inv^{\mathsf{part}}_{\mathsf{graph}}$ required to be monotonic; and $\Inv_{\mathsf{attr}}$ is the attribute invariant.
\end{definition}

We view
$\Inv^{\mathsf{part}}_{\mathsf{graph}}$ as comprising an \emph{intrinsic well-formedness}
condition together with schema-provided monotonic constraints. Concretely, the intrinsic
component requires that (i) object identifiers are unique, (ii) for every reference role,
the current reference set does not exceed the declared upper multiplicity, and (iii) the
given bounds are consistent, in the sense that every target class has an upper scope large
enough to satisfy the maximum lower multiplicity demanded by any incoming role.

 Let $G$ range over Maude object configurations---multisets of object records $\langle o : C \mid \mathrm{AS} \rangle$ under multiset union modulo ACU. We write $G : \mathcal{S}$ to denote that $G$ \emph{conforms} to the schema, meaning that every object and reference in $G$ respects the declarations and typing in $\mathcal{S}$. 

\begin{remark}[Invariants vs.\ error predicates]
\label{rem:inv-vs-error}
At the semantic level, $\Inv^{\mathsf{part}}_{\mathsf{graph}}$ and
$\Inv^{\mathsf{full}}_{\mathsf{graph}}$ are invariants.
Operationally, MMF's \emph{error predicates} return $\mathsf{tt}$
when an invariant is \emph{violated}. For partial graphs, this check
combines built-in and schema-provided monotonic constraints. Thus, pruning
checks the \emph{negation} of semantic invariants.
\end{remark}

\begin{definition}[Bounded model specification]
\label{def:bounded-model-spec}
A \emph{bounded model specification} $\mathcal{M}_b = (\mathcal{M}, b)$ consists of a
core specification $\mathcal{M}$ together with a scope $b$.
The scope assigns to each class $C$ a finite cardinality interval $[\ell_C, u_C]$
and to each role $r : C \to D$ a finite multiplicity interval $[\ell_r, u_r]$,
bounding the number of targets that each object of class $C$ may have along $r$.
\end{definition}

We write $G : \mathcal{S}_b$ iff $G : \mathcal{S}$ and all bounds in $b$ hold in $G$:
each class $C$ occurs in $G$ with cardinality in $[\ell_C, u_C]$, and for each
object $o : C$ the set of target object identifiers assigned to role $r$ of $o$,
denoted $o.r$, has cardinality in $[\ell_r, u_r]$.

\begin{definition}[Constrained model instance]
\label{def:constrained-model}
 A \emph{constrained model instance} is a pair $M=\langle G,\rho\rangle$ where $G$ is a \emph{graph topology} (a finite set of typed object records with role assignments, i.e., an object configuration $G : \mathcal{S}$) and $\rho$ is a ground valuation for the attribute-level symbols occurring in $G$ and in the attribute invariant $\Inv_{\mathsf{attr}}(G)$.
\end{definition}

\begin{definition}[Conformance]
\label{def:conformance}
 Let $\mathcal{M}_b$ be a bounded specification. A constrained model instance $M=\langle G,\rho\rangle$ \emph{conforms} to $\mathcal{M}_b$, written $M : \mathcal{M}_b$, iff:
 \[
 \begin{aligned}
 & G : \mathcal{S}_b
 \ \wedge\ (G \models \Inv^{\mathsf{part}}_{\mathsf{graph}})
 \ \wedge\ (G \models \Inv^{\mathsf{full}}_{\mathsf{graph}})
 \ \wedge\ (\rho \models \Inv_{\mathsf{attr}}(G)).
 \end{aligned}
 \]
\end{definition}

\begin{definition}[Bounded semantics]
\label{def:bounded-semantics}
 Define
 \(
 \llbracket \mathcal{M}_b \rrbracket
 \;=\;
 \{\, M \mid M : \mathcal{M}_b \,\}.
 \)
\end{definition}


\section{Symbolic State and Transition Semantics}
\label{sec:mmf-semantics}

 We now turn to the formal machinery enabling automatic model finding, the \emph{MMF calculus}. Unlike operational
 semantics that evolves a system over time, MMF employs a \emph{constructive} symbolic
 transition system where configurations denote partial graphs and transitions represent refinement
 decisions. This design incrementally builds the graph, explicitly separating
 \emph{existence} from \emph{topology}.

This section formalizes the symbolic transition semantics of MMF. We first define
runtime states in Section~\ref{sec:mmf-calculus-states}, which encapsulate the
partial graph and symbolic constraints. We then detail the two-phase construction process:
the object generation phase in Section~\ref{sec:mmf-object-generation}, which allocates
objects and their data attributes, and the graph creation phase in
Section~\ref{sec:mmf-reference-generation}, which enumerates reference topologies over
the fixed object universe.


\subsection{States in the MMF Runtime}
\label{sec:mmf-calculus-states}


The MMF calculus operationalises the bounded satisfiability problem for a bounded model specification $\mathcal{M}_b$ as a finite reachability search over runtime states.
Let \confSort{} be the Maude configuration sort, i.e., multisets of objects and control tokens
under multiset union modulo ACU. We write $\conf, \conf' : \confSort$.
We write $\graph{G}$ for the object-graph fragment (a multiset of object records),
and $\phiC$ for Boolean constraints over the background SMT theories.
 Class and role identifiers are $\K$ and $\R$; object identifiers $\obj$
 are drawn from a finite pool bounded by $b$.

\begin{definition}[Runtime states and graph projection]
\label{def:runtime-states}
A \emph{runtime state} is a constrained configuration $S \equiv \state{\conf}{\phiC}$,
where $\conf$ is a multiset of objects including $\graph{G}$ and control tokens modulo ACU, and $\phiC$
is a Boolean constraint. Let $\pi(\conf)$ be the \emph{graph projection} that erases
all runtime tokens from $\conf$ and keeps only the object records.
\end{definition}

 Semantically, $S$ denotes a set of constrained model instances $M=\langle \pi(\conf),\rho\rangle$ such that $\rho \models \phiC$; its denotation is given in Definition~\ref{def:state-semantics}. Object identifiers are treated as fixed uninterpreted symbols during construction; only attribute-level symbols are assigned values by the constraint valuations.
A configuration $\conf$ packages both the evolving graph and bounded construction control.
Its core tokens include a \emph{phase marker} $\status{q}$ with $q \in \{\objectBuild, \refBuild\}$;
an \emph{identifier universe} $\allIds{\os}$ indexed by class;
an \emph{allocation cursor} $\freshIds{\ol}$ and the set of allocated
identifiers $\mathsf{takenIds}(\mathit{TI})$;
\emph{lower-bound obligations} $\missing{\K}{n}$ and \emph{feasibility
obligations} $\mathsf{need}(\K, n)$; and
a \emph{reference-task pool} $\refPool{\rtList}$ discharged in phase $\refBuild$.
These tokens are operational devices: they ensure boundedness, phase discipline, and
exhaustive enumeration.

\begin{remark}[Role of runtime tokens]
\label{rem:runtime-tokens}
Runtime tokens (identifier pools, obligation counters, task lists, phase markers)
are part of $\conf$ because they drive the operationalization, support termination,
and enable completeness by systematic enumeration. They do \emph{not} appear in
the semantic satisfaction relation $\models$; the link is mediated by the projection
$\pi$ and the state denotation $\llbracket \cdot \rrbracket$.
\end{remark}


\subsection{Overview of the MMF Construction Calculus}
\label{sec:mmf-calculus-overview}

The MMF calculus is \emph{two-phase}: in phase $\objectBuild$ it allocates a bounded set
of objects, and in phase $\refBuild$ it assigns reference-valued attributes. Let $B$ be the structural axioms involved in the constructors of $\mathcal{M}$ and runtime tokens. Each one-step
move is a rule application on the runtime configuration $\conf$ modulo $B$, followed by
equational normalization by the oriented equations $\vec{E}$ modulo $B$.

 We distinguish three layers of rewriting: a \emph{raw} rule step operates on the configuration $\conf$ (modifying the graph topology $G$); a \emph{normalized} step applies equational simplification; and a \emph{constrained} step lifts the normalized structural transition to the level of runtime states by accumulating the attribute constraints implied by the structural change.

\begin{definition}[Raw rewriting step]
\label{def:raw-step}
We write $\conf \to^{\ell,\sigma}_{R/B} \conf_1$ iff there exists a rule in $R$ with
label $\ell$ and a substitution $\sigma$ (obtained by matching modulo $B$) such that
applying the instantiated rule $\ell\sigma$ rewrites $\conf$ to $\conf_1$ modulo $B$.
\end{definition}

\begin{definition}[Normalized rewriting step]
\label{def:norm-step}
We write $\conf \Rightarrow^{\ell,\sigma}_{R,\vec{E}/B} \conf'$ iff there exists an
intermediate $\conf_1$ such that $\conf \to^{\ell,\sigma}_{R/B} \conf_1$ and
$\conf' = \conf_1 \downarrow_{\vec{E}/B}$, where $\conf_{1} \downarrow_{\vec{E}/B}$ denotes the canonical form of $\conf_{1}$ with respect to the oriented equations $\vec{E}$ modulo axioms $B$.
\end{definition}

\begin{remark}[Maude implementation of rewriting steps]
\label{rem:maude-steps}
The raw step $\conf \to^{\ell,\sigma}_{R/B} \conf_1$ is realised via Maude's reflective \texttt{metaApply}, which applies a single rewrite rule and returns the result modulo $B$.
Normalization $\downarrow_{\vec{E}/B}$ corresponds to \texttt{metaReduce}, which computes the $\vec{E}/B$-canonical form.
These are the standard reflective operations used throughout the MMF engine (Section~\ref{sec:mmf-reachability}).
\end{remark}

Only two rule families add user attribute constraints to the guard: object creation
(\textsc{Obj-Gen}) and reference commitment (\textsc{Ref-Choose}). We factor these
event-specific contributions through two hook shorthands:
\begin{align*}
\mathsf{hook}_{\mathsf{create}}(K,o) &\triangleq \specInvOnCreate(K,o)\\
\mathsf{hook}_{\mathsf{setref}}(K,R,o,v) &\triangleq \specInvOnSetRef(K,R,o,v)
\end{align*}
and define the auxiliary function $\mathsf{hook}_\ell(\sigma)$ to capture the rule-indexed
injection logic:
\[
\mathsf{hook}_\ell(\sigma) \triangleq \begin{cases}
\mathsf{hook}_{\mathsf{create}}(K\sigma, \obj\sigma) & \text{if } \ell = \textsc{Obj-Gen} \\
\mathsf{hook}_{\mathsf{setref}}(K\sigma, \R\sigma, \obj\sigma, v\sigma) & \text{if } \ell = \textsc{Ref-Choose} \\
\mathsf{true} & \text{otherwise}
\end{cases}
\]

\begin{definition}[Constrained rewriting step]
\label{def:constrained-step}
We write $$\state{\conf}{\phiC} \Rightarrow^{\ell,\sigma}_{R,\vec{E}/B} \state{\conf'}{\phiC'}\qquad \text{if} \qquad
\conf \Rightarrow^{\ell,\sigma}_{R,\vec{E}/B} \conf',$$ and
the constraint is strengthened by the corresponding hook and normalized:
$\phiC' = \simpB\bigl(\phiC \wedge \mathsf{hook}_\ell(\sigma)\bigr) \downarrow_{\vec{E}/B}$\footnote{The $\simpB$ function implements a deterministic simplification strategy that normalizes Boolean constraints by flattening associative-commutative operators, ordering literals, pushing negations into relational atoms, performing arithmetic constant folding, and eliminating identity elements.}.
\end{definition}

\begin{remark}[Contextual conventions]
\label{rem:context-conventions}
Unless stated otherwise, all runtime configurations and constraints are maintained in
$\vec{E}/B$-canonical form.
In the rule presentations below, we write $\state{\conf}{\phiC} \Rightarrow \state{\conf'}{\phiC'}$
as shorthand for the constrained rewriting step $\state{\conf}{\phiC} \Rightarrow^{\ell,\sigma}_{R,\vec{E}/B}
\state{\conf'}{\phiC'}$; the rule label $\ell$, matching substitution $\sigma$, and theory parameters
$R, \vec{E}, B$ are implicit from context.
We write $S \stepRel S'$ when $S \Rightarrow^{\ell,\sigma}_{R,\vec{E}/B} S'$ holds for some $\ell$ and $\sigma$
(Definition~\ref{def:constrained-step}); in \texttt{check} mode, the step additionally updates the witness component $\psi_P$ as described below.
\end{remark}

Let $\mathsf{init}(\mathcal{M}_b, \mathsf{mode})$ be the canonical initial runtime state.
The configuration $\conf_0$ contains $\status{\objectBuild}$, an identifier universe
$\allIds{\os}$ and allocation list $\freshIds{\ol}$, where $\ol$ is a fixed
linearization of the identifiers in $\os$ induced by the class order
$\mathsf{classList}$ and the per-class upper bounds in $b$.
Its head determines the next identifier to be consumed and hence the canonical allocation order.
It also contains the initial bounded-work obligations: for every class $C$ in the schema, $\conf_0$ includes a token $\missing{C}{\ell_C}$, where $\ell_C$ is the class lower bound from $b$, and a token $\need{C}{\mathrm{inNeed}(C)}$, where $\mathrm{inNeed}(C)$ is the maximum lower multiplicity among roles targeting $C$.
Finally, $\phi_0 = \mathsf{true}$.
If $\mathsf{mode} = \texttt{find}$, then $\mathsf{init}(\mathcal{M}_b, \texttt{find}) = \state{\conf_0}{\phi_0}$.
If $\mathsf{mode} = \texttt{check}$, then $\mathsf{init}(\mathcal{M}_b, \texttt{check}) = \cstate{\conf_0}{\phi_0}{\psi_0}$ with $\psi_0 = \mathsf{false}$.
In \texttt{check} mode, $\phiC$ accumulates hard constraints, as usual, but $\psi_P$ accumulates an existential violation witness.

\paragraph{Property instrumentation (\texttt{check}).}
In \texttt{check} mode, the calculus is additionally parameterised by a mixed property with a graph component and an attribute component.
Semantically, the graph component is an invariant $\Inv^{\mathsf{prop}}_{\mathsf{graph}}$ evaluated on projected graphs.
Operationally, the implementation provides an error predicate $\mathsf{errorPredPropGraph}$ such that $\mathsf{errorPredPropGraph}(G) = \mathsf{tt}$ iff $G \not\models \Inv^{\mathsf{prop}}_{\mathsf{graph}}$.
The attribute component is specified by hook equations $\mathsf{propInvOnCreate}$ and $\mathsf{propInvOnSetRef}$ (Section~\ref{sec:mmf-running-example}).
Operationally, \texttt{check} reuses the same construction rules and the same hard-constraint accumulation.
In addition, it records whether some property violation can be witnessed by maintaining a separate Boolean witness component $\psi_P$ that accumulates violation evidence disjunctively across partial states.
Formally, for a rule instance with label $\ell$ and matching substitution $\sigma$, define
\[
\mathsf{propViol}_\ell(\sigma) \triangleq
\begin{cases}
\neg\,\mathsf{propInvOnCreate}(K\sigma, \obj\sigma) & \text{if } \ell = \textsc{Obj-Gen} \\
\neg\,\mathsf{propInvOnSetRef}(K\sigma, \R\sigma, \obj\sigma, v\sigma) & \text{if } \ell = \textsc{Ref-Choose} \\
\mathsf{false} & \text{otherwise.}
\end{cases}
\]
In \texttt{check} mode, the hard constraint is updated as usual.
The witness is updated by disjunction:
$\psi_P' = \simpB\bigl(\psi_P \vee \mathsf{propViol}_\ell(\sigma)\bigr)$.
Accordingly, in \texttt{check} mode we write
\[
\cstate{\conf}{\phiC}{\psi_P} \Rightarrow^{\ell,\sigma}_{R,\vec{E}/B} \cstate{\conf'}{\phiC'}{\psi_P'}
\]
iff $\conf \Rightarrow^{\ell,\sigma}_{R,\vec{E}/B} \conf'$ and the above two updates hold.

The calculus runs in two modes: \texttt{find} enumerates valid models satisfying all structural and data invariants, while \texttt{check} searches for a counterexample to a user-declared property.
Normal-form acceptance is defined accordingly.

\begin{definition}[Normal forms, admissibility, acceptance]
\label{def:nf-admissible-accept}
Normal form $\NF(S)$ and $\mathsf{Admissible}(S)$ are defined as follows:
\begin{itemize}
    \item $\NF(S)$ holds iff no calculus rule applies to $S$, i.e., iff $\neg\exists S'.\; S \stepRel S'$.
    \item $\mathsf{Admissible}(S)$ holds iff $\pi(\conf) \models \Inv^{\mathsf{part}}_{\mathsf{graph}}$ and $\SAT(\phiC)$.
    \item $\mathsf{Accept}_{\mathtt{find}}(S)$ holds iff $\NF(S) \wedge \mathsf{Admissible}(S) \wedge \bigl(\pi(\conf) \models \Inv^{\mathsf{full}}_{\mathsf{graph}}\bigr)$.
    \item $\mathsf{Accept}_{\mathtt{check}}(S)$ holds iff $\mathsf{Accept}_{\mathtt{find}}(S)$ and, additionally,
    \[
    \pi(\conf) \not\models \Inv^{\mathsf{prop}}_{\mathsf{graph}} \;\vee\; \SAT(\phiC \wedge \psi_P).
    \]
\end{itemize}
\end{definition}

\begin{remark}[Counterexamples in \texttt{check} mode]
\label{rem:check-counterexamples}
In \texttt{check} mode, hard constraints are accumulated in $\phiC$ and define the denotation $\llbracket S \rrbracket$.
Property violations are accumulated in the witness component $\psi_P$ by disjunction.
Therefore, if $\SAT(\phiC \wedge \psi_P)$ holds, there exists a valuation $\rho$ such that $\rho \models \phiC$ and $\rho \models \psi_P$.
Then $\langle \pi(\conf), \rho \rangle \in \llbracket S \rrbracket$.
Moreover, under $\rho$ the attribute-level property is violated at least once.
Graph-level violations are witnessed independently by $\pi(\conf) \not\models \Inv^{\mathsf{prop}}_{\mathsf{graph}}$.
\end{remark}

\begin{definition}[State semantics]
\label{def:state-semantics}
Let $\rho$ range over valuations that assign \emph{ground} values to all attribute-level symbols occurring in $\phiC$ (e.g., symbols of the form $i(O,AN)$, $r(O,AN)$, $b(O,AN)$ for an object identifier $O$ and attribute name $AN$), while leaving the object identifiers occurring in $\conf$ unchanged.
Define the \emph{semantics} of a runtime state by
\[
\llbracket \state{\conf}{\phiC} \rrbracket
\;=\;
\{\, \langle \pi(\conf), \rho \rangle \mid \rho \models \phiC\,\}
\;=\;
\llbracket \cstate{\conf}{\phiC}{\psi_P} \rrbracket.
\]
\end{definition}

 Intuitively, $\llbracket \state{\conf}{\phiC} \rrbracket$ is the set of constrained model instances $M=\langle \pi(\conf), \rho \rangle$ such that $\rho$ satisfies the accumulated constraint $\phiC$.
 The projection $\pi(\conf)$ provides the graph topology and the object identifiers, which are treated as fixed uninterpreted symbols during construction.
 The valuation $\rho$ supplies ground values for the attribute-level symbols occurring in $\phiC$.
In \texttt{check} mode, the witness component $\psi_P$ affects acceptance but not denotation.


\subsection{Object Generation Phase ($\objectBuild$)}
\label{sec:mmf-object-generation}

Phase $\objectBuild$ allocates fresh objects and initializes reference-assignment obligations.
We assume mode \texttt{find} and a configuration containing $\status{\objectBuild}$.
The transition relation $\stepRel$ is given by constrained rewriting rules over runtime states.
Rule \textsc{Obj-Gen} matches the configuration modulo ACU to extract the next identifier
$\obj_\K$ from $\freshIds{\obj_\K :: \ol}$ (thus fixing a canonical allocation order). It
consumes $\obj_\K$, creates a fresh object, and decrements the class-obligation counters
$\missing{\K}{N}$ and $\mathsf{need}(\K, L)$. Here $\ominus$ denotes saturating subtraction,
i.e., $n \ominus 1 = \max(n-1, 0)$.

Object creation performs (i)~\emph{structural initialization} and (ii)~\emph{scheduling of
later reference choices}: $\initAttrs(\K, \obj_\K)$ populates attributes with well-typed
placeholders (e.g., $\emptyset$ for sets and a distinguished value for single references),
while $\mathsf{addTasks}$ enqueues one task $\refIdx{\obj_\K}{\R}{\K'}{0}{\mathit{REM}}$
per role $\R : \K \to \K'$, where $0$ is the initial rank and $\mathit{REM}$ is a
remaining-rank budget that represents the (finite) bounds-induced choice space. The constraint is strengthened with the
user hook $\specInvOnCreate(\K, \obj_\K)$ and canonicalized by $\simpB$.

\[
\textsc{Obj-Gen}\quad
\state{
    \begin{array}{@{}l@{}}
    \graph{G} \ms \\
    \freshIds{\obj_\K :: \ol} \ms \\
    \mathsf{takenIds}(\mathit{TI}) \ms \\
    \missing{\K}{N} \ms \\
    \mathsf{need}(\K, L) \ms \\
    \refPool{\rtList}
    \ldots
    \end{array}
}{\phiC}
\;\stepRel\;
\state{
    \begin{array}{@{}l@{}}
    \graph{G \ms \mathit{new}} \ms \\
    \freshIds{\ol} \ms \\
    \mathsf{takenIds}(\obj_\K, \mathit{TI}) \ms \\
    \missing{\K}{N \monus 1} \ms \\
    \mathsf{need}(\K, L \monus 1) \ms \\
    \refPool{\rtList'}
    \ldots
    \end{array}
}{\phi'}
\]
\noindent where
\[
\begin{aligned}
\phi'      &= \simpB(\phiC \land \mathsf{hook}_{\mathsf{create}}(\K, \obj_\K)),\\
\rtList'   &= \mathsf{addTasks}(\obj_\K, \K, \mathsf{roleList}(\K), \allIds{\os}, \rtList),\\
\mathit{new} &= \langle \obj_\K : \K \mid \initAttrs(\K, \obj_\K, \allIds{\os}) \rangle.
\end{aligned}
\]


Rule \textsc{Obj-Skip} allows exploring graphs smaller than the maximum scope. It applies only when lower-bound obligations are met ($\missing{\K}{0}$
and $\mathsf{need}(\K, 0)$):

\[
\textsc{Obj-Skip}\quad
\state{
    \begin{array}{@{}l@{}}
    \freshIds{\obj_\K :: \ol} \ms \\
    \missing{\K}{0} \ms \\
    \mathsf{need}(\K, 0)
    \ldots
    \end{array}
}{\phiC}
\;\stepRel\;
\state{
    \begin{array}{@{}l@{}}
    \freshIds{\ol} \ms \\
    \missing{\K}{0} \ms \\
    \mathsf{need}(\K, 0)
    \ldots
    \end{array}
}{\phiC}
\]
\noindent where the constraint is unchanged and the allocation cursor advances by dropping $\obj_\K$.

\begin{remark}[Partial states and global invariants]
\label{rem:partial-states-phase23}\sloppy{
Intermediate states during $\objectBuild$ are \emph{partial}: they do not necessarily
satisfy non-monotonic global invariants $\Inv^{\mathsf{full}}_{\mathsf{graph}}$. Only
monotonic constraints (captured by $\mathsf{specErrorPredGraphPartial}$) are checked
during construction. Full invariants are verified only at acceptance.}
\end{remark}

The $\objectBuild$ phase consists of the repeated application of these symbolic transition
rules until $\freshIds{\nil}$ is exhausted. This exhaustion triggers the transition to
the $\refBuild$ phase. Crucially, as the calculus switches phases, it restricts the global
$\allIds{\os}$ component to exactly the set of identifiers present in $\mathsf{takenIds}$.
This filtering step ensures that the subsequent reference generation phase considers
only the objects that were actually allocated, preventing the creation of dangling
edges to skipped or non-existent objects by construction.


\subsection{Graph Creation Phase ($\refBuild$)}
\label{sec:mmf-reference-generation}

In the $\refBuild$ phase, reference tasks are discharged. A task
$\rt = \refIdx{\obj_\K}{\R}{\K'}{\mathsf{curr}}{\mathit{REM}}$ stores a \emph{rank cursor}
$\mathsf{curr}$ together with a \emph{remaining-rank budget} $\mathit{REM}$.
It represents the interval of admissible ranks
$[\mathsf{curr},\, \mathsf{curr}+\mathit{REM}]$ in the canonical enumeration of role values.

A key semantic choice in $\refBuild$ is that role values are not generated by
incremental extension, but are instead selected as elements of a finite,
\emph{intrinsically defined} domain determined solely by the declared multiplicity
bounds and the identifier universe. Fix a reference task $\refIdx{\obj_\K}{\R}{\K'}{\mathsf{curr}}{\mathit{REM}}$
and let $\os$ be the class-indexed identifier set associated with $\K'$ in $\allIds{\os}$.
Write $P = [o_1, \dots, o_n]$ for the canonical list obtained from $\os$ under a fixed
total order on identifiers, and let $\lb, \ub$ be the lower and upper multiplicity bounds
for role $\R$ (with $0 \le \lb \le \ub \le n$).
The set $\mathcal{D}_{\lb,\ub}(\os)$ of all admissible reference assignments is the union of all subsets of targets
whose size falls within the declared bounds:
\[
\mathcal{D}_{\lb,\ub}(\os) \;=\; \bigcup_{m=\lb}^{\ub} \{\, S \subseteq \os \mid |S| = m \,\},
\qquad
U_{\max} \;=\; |\mathcal{D}_{\lb,\ub}(\os)| - 1.
\]
Initially, a task for this role is created with $\mathsf{curr}=0$ and $\mathit{REM}=U_{\max}$, so that the represented rank interval is exactly $[0, U_{\max}]$.
MMF induces a canonical enumeration over this space without materializing it. The order
is defined hierarchically: first by subset size $m$ (from $\lb$ to $\ub$), and then,
within each size, by lexicographic unranking induced by the list order $P$.

The core mechanism is the unranking function $\mathsf{unrank}(P, m, j)$, which acts as
a stateless cursor. It reconstructs the $j$-th subset of size $m$ directly from $P$
using the standard combinatorial number system for ranking and unranking combinations
(see, e.g., \cite{knuthArtComputerProgramming2011}). The intrinsic selector $\choiceRef$ is
then defined by delegating to the ranked enumeration:
\[
\choiceRef(\os, \K, \R, \K', \mathsf{curr}) \;=\;
\mathsf{rankedFromTo}(P, \lb, \ub, \mathsf{curr}).
\]

Rule \textsc{Ref-Choose} triggers a symbolic transition that commits to the assignment
defined by the current rank $\mathsf{curr}$. It calculates the reference $v$ using
$\choiceRef$ and performs the semantic update to the object's
attribute set:

\begin{align*}
&\textsc{Ref-Choose}\\
&\state{
    \begin{array}{@{}l@{}}
      \allIds{\os} \\
    \graph{\langle \obj_\K : \K \mid \text{AS} \rangle \dots} \ms \\
    \refPool{\rt :: \rtList}
    \ldots
    \end{array}
}{\phiC}
\;\stepRel\;
\state{
    \begin{array}{@{}l@{}}
      \allIds{\os} \\
    \graph{\langle \obj_\K : \K \mid \text{AS}' \rangle \dots} \ms \\
    \refPool{\rtList}
    \ldots
    \end{array}
}{\phi'}
\end{align*}
\noindent where $\rt = \refIdx{\obj_\K}{\R}{\K'}{\mathsf{curr}}{\mathit{REM}}$,
$v = \choiceRef(\os, \K, \R, \K', \mathsf{curr})$,
$\text{AS}' = \setRef(\text{AS}, \R, v)$, and
$\phi' = \simpB(\phiC \land \mathsf{hook}_{\mathsf{setref}}(\K, \R, \obj_\K, v))$.

Rule \textsc{Ref-Skip} represents the decision to \emph{discard} the current assignment
and continue searching. It increments the cursor $\mathsf{curr}$ to explore the
next candidate and decrements the remaining-rank budget $\mathit{REM}$, thereby shrinking
the residual rank interval while preserving the global upper rank $\mathsf{curr}+\mathit{REM}$.

\[
\textsc{Ref-Skip}\quad
\state{
     \begin{array}{@{}l@{}}
     \refPool{\rt :: \rtList}
     \end{array}
     \ldots
}{\phiC}
\;\stepRel\;
\state{
     \begin{array}{@{}l@{}}
     \refPool{\rt' :: \rtList}
     \end{array}
     \ldots
}{\phiC}
\]
\noindent where $\rt = \mathsf{refIdx}(\obj_\K, \R, \K', \mathsf{curr}, \mathit{REM})$, 
$\mathit{REM} > 0$, and $\rt' = \mathsf{refIdx}(\obj_\K, \R, \K', \mathsf{curr}+1, \mathit{REM}-1)$.
Repeated application of \textsc{Ref-Skip} advances through the bounds-induced domain of
admissible values. When $\mathit{REM} = 0$, the only option is to choose (or fail if the
domain is empty), ensuring termination. Completion of $\refBuild$ yields a normal form
$\NF$, checked for global acceptance.

\begin{remark}[Finite enumeration]
\label{rem:finite-enum}
Each reference task has a finite choice space determined by $b$, so termination follows:
\textsc{Ref-Skip} decreases the remaining interval width, and \textsc{Ref-Choose} removes
the task from the pool.
\end{remark}


\section{Symbolic Reachability Engine}
\label{sec:mmf-reachability}


Section~\ref{sec:mmf-semantics} defined the symbolic transition relation
$\stepRel$ over runtime states $\state{\conf}{\phiC}$. We now define a meta-level
exploration strategy that traverses $\stepRel$ under a fixed reduction pipeline comprising exact
deduplication, admissibility pruning, and optional folding.
The exploration engine maintains a loop state
$$L = (\mathit{PQ},\; \mathit{SeenMap},\; \mathit{SeenSuccMap},\; \mathit{ExactMap},\; \mathit{Profiler}),$$
organising the exploration components. The priority queue $\mathit{PQ}$ serves as a worklist of pending nodes, where each node packages a runtime state with an index key and a cached $\NF$ flag. Heuristic search and folding are facilitated by $\mathit{SeenMap}$, which stores retained representatives per index key for coverage-based pruning, and $\mathit{SeenSuccMap}$, which memoises exact successor states to suppress duplicates. Finally, $\mathit{ExactMap}$ collects accepted normal-form states as they are found, while the $\mathit{Profiler}$ tracks instrumentation counters.


\subsection{Meta-level Exploration Semantics}
\label{sec:mmf-exploration}

The search procedure is parameterized by a configuration term \texttt{Opt}, which controls active reduction strategies (e.g., symmetry handling, folding). Starting from an initial state $S_0$, the loop initializes the state components $L$ and iterates as long as the worklist is non-empty. In each step, the engine selects a pending node via \textsc{Select} and delegates it to \textsc{processNode}. The loop terminates when the frontier is exhausted or, in \texttt{findFirst} mode, as soon as a witness is recorded.

The \textsc{processNode} routine bifurcates based on the state's status. If $S$ is a normal form ($\NF(S)$), it constitutes a candidate result; the engine validates it via the mode-specific acceptance predicate and, if successful, records it in $\mathit{ExactMap}$. If $S$ is expandable, the engine computes its one-step symbolic successors using the expansion operator $\textsc{Step}(S)$ (implemented via Maude's reflective \texttt{metaApply}). These raw successors then traverse the reduction pipeline: $\textsc{FilterExact}$ eliminates duplicates against $\mathit{SeenSuccMap}$; $\textsc{Prune}$ discards inadmissible states (violating $\Inv^{\mathsf{part}}_{\mathsf{graph}}$ or with an unsatisfiable constraint); and, if enabled by \texttt{Opt}, a folding stage enforces semantic coverage against $\mathit{SeenMap}$. In the current implementation, folding is applied only to successors that are already in normal form. Finally, surviving successors are added to the worklist via \textsc{Insert}.


\subsection{Subsumption and Folding}
\label{sec:mmf-subsumption}

This subsection formalizes the redundancy elimination mechanism used by the exploration
loop. The key idea is to maintain, among explored normal forms, a \emph{covering antichain}
with respect to a \emph{semantic embedding preorder}. This preorder soundly approximates
the semantic \emph{coverage} relation and supports both queue-time pruning
(\textsc{FoldSift}) and repository maintenance (\textsc{FoldRefresh}).

Recall from Definition~\ref{def:state-semantics} that a runtime state $S \equiv \state{\conf}{\phiC}$ denotes a set $\llbracket S \rrbracket$ of constrained model instances.
Each constrained model instance is a pair $M=\langle \pi(\conf), \rho \rangle$ consisting of a graph topology and a satisfying valuation.
We recall the \emph{semantic coverage preorder} $\preceq$:
\[
S_1 \preceq S_2 \quad\text{iff}\quad \llbracket S_2 \rrbracket \subseteq \llbracket S_1 \rrbracket.
\]
Intuitively, $S_1$ covers $S_2$ if every constrained model instance represented by $S_2$ is also
represented by $S_1$. Coverage is a preorder, not an equivalence.
The engine realizes the preorder $\preceq_{\mathrm{MMF}}$ via $\mathtt{subsumes}(S_1, S_2)$, using three criteria to determine coverage.
First, an \textit{index equality} gate checks $\shapeInd(S_1) = \shapeInd(S_2)$ to filter incompatible candidates.
Second, \textit{structural embedding} employs \texttt{metaMatch} to find a substitution $\sigma$ such that $\pi(\conf_2) =_{E \cup B} \pi(\conf_1) \sigma$.
Third, \textit{constraint entailment} verifies via SMT that $\phiC_2$ entails $\phiC_1 \sigma$.
Meeting these guarantees $S_1$ represents every constrained model instance of $S_2$, making $\mathtt{subsumes}$ a sound (albeit incomplete) semantic coverage test.

Folding maintains a covering antichain $\mathcal{A}$ of normal-form representatives during exploration. Thus, $\mathcal{A}$ contains no mutually subsuming representatives yet covers all explored normal forms. Operationally, $\textsc{FoldSift}$ discards a candidate $S$ only if $\NF(S)$ and some $R \in \mathcal{A}$ satisfies $\mathtt{subsumes}(R, S)$. Conversely, $\textsc{FoldRefresh}$ inserts new representatives by removing any $R \in \mathcal{A}$ subsumed by the incoming state. 
Hence, coverage is sound: every constrained model instance in the bounded semantics remains represented by at least one retained representative in the exploration graph.
MMF does not compute graph isomorphism modulo arbitrary identifier permutations.
Instead, symmetry reduction comes from representing configurations modulo the equational theory $(\Sigma, E \cup B)$ (including ACU for multiset composition) and from shape indexing.
Within a bucket, structural embedding is decided by matching modulo $E \cup B$ via \texttt{metaMatch}, while data-level constraints in $\phiC$ are discharged by SMT.

\section{Correctness of the MMF Calculus}
\label{sec:mmf-proofs}

We summarise the main metatheoretic properties of the bounded construction calculus and its search engine.
The bounded model specification $\mathcal{M}_b$ fixes a finite universe and role ranges, so all guarantees are relative to the declared bounds.
 Soundness ensures that any accepted normal form denotes only constrained model instances that conform to $\mathcal{M}_b$, so reported witnesses are never spurious.
 Completeness ensures that every constrained model instance admitted by $\mathcal{M}_b$ is represented by some retained accepted state, so folding and subsumption do not discard valid solutions.
 Together, these results show that the calculus explores the entire bounded search space modulo coverage-preserving reductions.

Termination and well-formedness explain why exploration is finite and why intermediate states remain meaningful.
Runtime well-formedness $\mathsf{WF}(S)$ captures intrinsic structural consistency of partial graphs and the correct accumulation of constraint obligations in $\phiC$, and each rule preserves it.
Under standard assumptions on equational normalization and finite enumeration of reference tasks, the lexicographic measure on remaining identifiers and task widths strictly decreases.
Consequently, the bounded search terminates and pruning by monotonic structural constraints and SMT unsatisfiability remains correct.
Full statements and proofs appear in Appendix~\ref{sec:app-proofs}.

\section{Evaluation}
\label{sec:evaluation}

We evaluate MMF on the CEO benchmark (Company--Employee--Project) from Section~\ref{sec:mmf-running-example}, which combines structural constraints (e.g., ``CEO has no manager'', acyclicity) with symbolic integer invariants.
The goal is a proof-of-concept assessment of feasibility rather than an exhaustive performance study.
Tables~\ref{tab:eval-modes} and~\ref{tab:eval-ablations} use the bounds from Listing~\ref{lst:org-proj-bounds}: 1 company, 2 employees, and 0--2 projects. Table~\ref{tab:eval-scalability} varies the employee bound from 2 to 5. Timeouts are set at 120\,s.
All runs use a deterministic worklist policy (\texttt{useHeuristicPQ=false}).
We report wall-clock time (\texttt{ms}), Maude rewrites, and driver counters: expanded states (\texttt{popped}), normal forms (\texttt{NF}), successors generated/enqueued, and eliminations by exact deduplication, structural pruning (\texttt{prunedBad}), SMT unsatisfiability (\texttt{prunedUnsat}), and folding (\texttt{folded}).
Unless stated otherwise, the baseline enables SMT pruning, folding, and shape indexing (\texttt{useSymbContraintFiltering=true}, \texttt{useFolding=true}, \texttt{useIndex=true}), and removes redundant solutions at insertion (\texttt{removeRedundantSolutions=true}).

Table~\ref{tab:eval-modes} compares \texttt{findAll}/\texttt{findFirst} and \texttt{find}/\texttt{check} under the baseline pipeline.
For \texttt{findAll}, more than half of generated successors are eliminated before queueing, with structural pruning being the largest contributor and SMT pruning and folding providing additional reductions.
\texttt{findAll(check)} expands fewer states than \texttt{findAll(find)} but is slightly slower, indicating higher per-state overhead in \texttt{check} (property-side evaluation and constraint processing).
For \texttt{findFirst}, folding does not contribute on this scope (\texttt{folded}=0); runtime is largely determined by pruning and successor generation.

\begin{table}[t]
\small
\centering
\setlength{\tabcolsep}{4pt}
\begin{tabular}{lrrrrrr}
\hline
run & ms & rewrites & popped & NF & pruned & folded \\
\hline
\texttt{findAll(find)}   & 2181 & 7.92M  & 450 & 52 & 354 & 88 \\
\texttt{findFirst(find)} &  136 & 284.8K &  29 &  1 &  27 &  0 \\
\texttt{findAll(check)}  & 2409 & 11.11M & 374 & 40 & 298 & 64 \\
\texttt{findFirst(check)}&  144 & 317.6K &  29 &  1 &  21 &  0 \\
\hline
\end{tabular}
\caption{Mode comparison under the baseline pipeline. \emph{pruned} = \texttt{prunedBad}+\texttt{prunedUnsat}.}
\label{tab:eval-modes}
\end{table}

Table~\ref{tab:eval-ablations} isolates the reduction mechanisms for \texttt{findAll(find)}.
Disabling SMT pruning while keeping folding enabled times out, whereas disabling both SMT pruning and folding completes but is an order of magnitude slower than baseline; thus SMT satisfiability filtering is essential and also stabilizes folding (which relies on entailment checks).
Disabling folding decreases runtime but increases the number of normal forms; redundant solutions are then filtered only at insertion time, showing that folding prevents exploring many redundant candidates to completion.
Folding without indexing times out, indicating that shape indexing is a prerequisite for making folding feasible.

\begin{table}[t]
\small
\centering
\setlength{\tabcolsep}{4pt}
\begin{tabular}{lrrrrrr}
\hline
run & ms & rewrites & popped & NF & inserted & redundant \\
\hline
\texttt{A1-baseline}         &  2185 & 7.92M  &  450 &  52 &  52 &  0 \\
\texttt{A2-noSMT}            & timeout &  -   &   -  &  -  &  -  &  - \\
\texttt{A3-noFolding}        &  1258 & 2.20M  &  494 &  96 &  52 & 44 \\
\texttt{A4-folding-noIndex}  & timeout &  -   &   -  &  -  &  -  &  - \\
\texttt{A5-noSMT-noFolding}  & 20299 & 28.01M & 2054 & 192 & 100 & 92 \\
\hline
\end{tabular}
\caption{Ablations for \texttt{findAll(find)}. Timeouts occur at 120\,s.}
\label{tab:eval-ablations}
\end{table}

Table~\ref{tab:eval-scalability} scales \texttt{findFirst(check)} from 2 to 5 employees, comparing baseline to \texttt{noFolding}.
Baseline time increases sharply with the bound, while \texttt{noFolding} scales much more gently; at 5 employees, \texttt{noFolding} is over an order of magnitude faster.
Since folding eliminates only a few successors at these bounds, this suggests that subsumption checks dominate cost in witness-oriented search when folding successes are rare.
All \texttt{folding-noIndex} scalability runs timed out, consistent with the ablation results: indexing is necessary for folding to be viable.

Overall, structural pruning removes most successors in exhaustive search, while SMT pruning is essential for feasibility and stabilises folding.
Folding reduces redundant candidates when shape indexing makes subsumption checks practical.
In \texttt{check} mode, witness search incurs higher per-state overhead but remains practical on the explored bounds.

\begin{table}[t]
\small
\centering
\setlength{\tabcolsep}{4pt}
\begin{tabular}{rrrrrr}
\hline
employees & ms (base) & rewrites (base) & folded (base) & ms (noFold) & rewrites (noFold) \\
\hline
2 &   97 & 0.10M & 0 &  88 & 0.05M \\
3 &  147 & 0.32M & 0 & 107 & 0.05M \\
4 &  737 & 4.08M & 3 & 172 & 0.13M \\
5 & 3098 & 20.27M& 6 & 264 & 0.24M \\
\hline
\end{tabular}
\caption{Scalability for \texttt{findFirst(check)} as the employee bound increases.}
\label{tab:eval-scalability}
\end{table}

The evaluation focuses on one benchmark family; folding effectiveness may differ for specifications with stronger symmetries or weaker constraints.
We report single-run timings; although runs are deterministic, wall time can vary with system load.
Rewrite counts are a stable internal proxy for work but do not isolate SMT solver time.
Timeout-based conclusions depend on the 120\,s cutoff, but the observed blow-ups are consistent across the ``noSMT+folding'' and ``folding-noIndex'' configurations.

\section{Related work}
\label{sec:related-work}

\paragraph{Bounded Model Finding.}
MMF shares the goal of bounded model finders like Alloy \cite{jackson_2019_alloy_language_tool} and USE \cite{kuhlmann_2011_extensive_validation_ocl,gogolla_0000_model_finding_model}: generating concrete object graphs that satisfy structural constraints. Unlike these tools, which typically rely on translation to SAT/SMT solvers (often creating a semantic gap), MMF operates as a native symbolic rewriting engine within rewriting logic. This allows domain logic and constraints to be defined directly in Maude \cite{durn_2020_programming_symbolic_computation}, providing a uniform environment for specification, analysis, and verification.

\paragraph{Symbolic Reachability and Symbolic Rewriting.}
Symbolic analysis in Maude ranges from narrowing-based reachability to symbolic rewriting modulo SMT~\cite{meseguer_2007_symbolic_reachability_analysis,bae_2019_symbolic_state_space,lpez-rueda_2023_efficient_canonical_narrowing}.
MMF repurposes this symbolic infrastructure for bounded structural model finding.
Technically, MMF explores a constrained symbolic rewriting system over runtime states that refine partial object graphs, rather than using narrowing to solve for substitutions that drive temporal reachability.
The correctness of our calculus rests on established results in this domain.
Our \emph{Soundness} theorem (Theorem~\ref{thm:soundness}) mirrors the simulation results of Bae and Rocha \cite{bae_2019_symbolic_state_space} and Arias et al. \cite{arias_2024_rewritinglogicwithsmtbased_formal_analysis}, ensuring that every symbolic step corresponds to a valid refinement of the concrete model space.
Our \emph{Completeness} theorem (Theorem~\ref{thm:completeness}) follows from a bounded two-phase enumeration argument within the user-declared bounds, guaranteeing that our generation strategy can construct any satisfying concrete model instance that exists within those bounds.

\paragraph{State Space Folding.}
To ensure feasibility, MMF employs state folding to prune redundant search paths. This technique relies on a subsumption relation that approximates semantic containment. As formalized by Arias et al. \cite{arias_2024_rewritinglogicwithsmtbased_formal_analysis} and Meseguer \cite{meseguer_2026_symbolic_computation_verification}, a correctly defined subsumption relation ($\preceq$) ensures that if a state $S$ is subsumed by an existing representative $S'$, pruning $S$ preserves completeness because all constrained model instances denoted by $S$ are already covered by $S'$. MMF implements this via graph shape indexing and logical entailment checks.

\paragraph{Comparison with specific approaches.}
Riesco's testing techniques exploit Maude's symbolic engines (notably narrowing) to generate concrete witnesses~\cite{riesco_using_2012}.
However, Riesco targets test input generation for functional correctness of operational semantics, whereas MMF focuses on structural model finding under schema constraints, using SMT integration primarily for satisfiability of data invariants rather than full symbolic execution.
 Escobar et al. develop folding optimisations for SMT-based symbolic reachability based on narrowing~\cite{escobar_symbolic_2023}.
 MMF adapts these principles but specializes the redundancy layer (shape indexing, ACU-matching, and entailment-based subsumption) to exploit order-only symmetries induced by the equational theory (notably ACU) and to reduce redundancy in strictly bounded structural generation.

\section{Conclusion and Future Work}
\label{sec:conclusion}

This paper presented the Maude Model Finder (MMF), a native approach to bounded model finding in the Maude ecosystem. By formalizing bounded model generation as symbolic transition exploration over runtime states within rewriting logic, MMF allows users to validate structural and data-level constraints without departing from Maude's framework. Our approach synthesizes the structural generation techniques of SAT-based finders with symbolic rewriting modulo SMT, using SMT strictly as a background solver for constraint satisfiability and entailment, and leveraging folding to quotient redundant symbolic states. We establish termination, soundness, and completeness of the bounded construction within the declared bounds, and justify folding via a coverage-preserving subsumption test.

Several avenues for future work remain. First, we plan to generalize the current meta-level exploration engine by integrating Maude's strategy language \cite{eker_2023_maude_strategy_language}, offering users a declarative way to customize search heuristics and pruning policies. Second, while our current symmetry reduction relies on canonical construction and post-hoc folding, we aim to investigate constructive symmetry breaking techniques to further prune the frontier. Finally, we intend to expand the supported constraint language to cover a richer subset of OCL, enhancing MMF's utility for MDE applications.

\section*{Acknowledgements}
The author is grateful to Jos\'{e} Meseguer and Santiago Escobar for insightful discussions and detailed feedback on drafts of this article.

\bibliographystyle{splncs04}
\bibliography{bib.bib}

@article{arias_2024_rewritinglogicwithsmtbased_formal_analysis,
  author   = {Arias, Jaime and Bae, Kyungmin and Olarte, Carlos and Ölveczky, Peter Csaba and Petrucci, Laure},
  doi      = {10.3233/FI-242195},
  issn     = {0169-2968},
  journal  = {Fundamenta Informaticae},
  language = {EN},
  month    = {September},
  note     = {Publisher: SAGE Publications},
  number   = {3-4},
  pages    = {261--312},
  title    = {A {Rewriting}-logic-with-{SMT}-based {Formal} {Analysis} and {Parameter} {Synthesis} {Framework} for {Parametric} {Time} {Petri} {Nets}},
  volume   = {192},
  year     = {2024}
}

@phdthesis{bae_2014_rewritingbased_model_checking,
  author    = {Bae, Kyungmin},
  copyright = {Copyright 2014 Kyungmin Bae},
  month     = {September},
  school    = {University of Illinois at Urbana-Champaign},
  title     = {Rewriting-based model checking methods},
  url       = {https://hdl.handle.net/2142/50553},
  urldate   = {2025-10-27},
  year      = {2014}
}

@article{bae_2019_symbolic_state_space,
  author   = {Bae, Kyungmin and Rocha, Camilo},
  doi      = {10.1016/j.scico.2019.03.006},
  issn     = {0167-6423},
  journal  = {Science of Computer Programming},
  month    = {June},
  pages    = {20--42},
  title    = {Symbolic state space reduction with guarded terms for rewriting modulo {SMT}},
  volume   = {178},
  year     = {2019}
}

@article{durn_2020_programming_symbolic_computation,
  author   = {Durán, Francisco and Eker, Steven and Escobar, Santiago and Martí-Oliet, Narciso and Meseguer, José and Rubio, Rubén and Talcott, Carolyn},
  doi      = {10.1016/j.jlamp.2019.100497},
  issn     = {2352-2208},
  journal  = {Journal of Logical and Algebraic Methods in Programming},
  month    = {January},
  pages    = {100497},
  title    = {Programming and symbolic computation in {Maude}},
  volume   = {110},
  year     = {2020}
}

@article{eker_2023_maude_strategy_language,
  author   = {Eker, Steven and Martí-Oliet, Narciso and Meseguer, José and Rubio, Rubén and Verdejo, Alberto},
  doi      = {10.1016/j.jlamp.2023.100887},
  issn     = {2352-2208},
  journal  = {Journal of Logical and Algebraic Methods in Programming},
  month    = {August},
  pages    = {100887},
  title    = {The {Maude} strategy language},
  volume   = {134},
  year     = {2023}
}

@article{meseguer_2007_symbolic_reachability_analysis,
  author   = {Meseguer, José and Thati, Prasanna},
  doi      = {10.1007/s10990-007-9000-6},
  issn     = {1573-0557},
  journal  = {Higher-Order and Symbolic Computation},
  language = {en},
  month    = {June},
  number   = {1},
  pages    = {123--160},
  title    = {Symbolic reachability analysis using narrowing and its application to verification of cryptographic protocols},
  volume   = {20},
  year     = {2007}
}

@article{gogolla_0000_model_finding_model,
  author    = {Martin Gogolla and
               Loli Burgue{\~{n}}o and
               Antonio Vallecillo},
  editor    = {Regina Hebig and
               Thorsten Berger},
  title     = {Model Finding and Model Completion with {USE}},
  booktitle = {Proceedings of {MODELS} 2018 Workshops: ModComp, MRT, OCL, FlexMDE,
               EXE, COMMitMDE, MDETools, GEMOC, MORSE, MDE4IoT, MDEbug, MoDeVVa,
               ME, MULTI, HuFaMo, AMMoRe, {PAINS} co-located with {ACM/IEEE} 21st
               International Conference on Model Driven Engineering Languages and
               Systems {(MODELS} 2018), Copenhagen, Denmark, October, 14, 2018},
  series    = {{CEUR} Workshop Proceedings},
  volume    = {2245},
  pages     = {194--200},
  publisher = {CEUR-WS.org},
  year      = {2018},
  url       = {https://ceur-ws.org/Vol-2245/ocl\_paper\_9.pdf},
  bibsource = {dblp computer science bibliography, https://dblp.org}
}

@article{jackson_2019_alloy_language_tool,
  author     = {Jackson, Daniel},
  doi        = {10.1145/3338843},
  issn       = {0001-0782},
  journal    = {Commun. ACM},
  month      = {August},
  number     = {9},
  pages      = {66--76},
  shorttitle = {Alloy},
  title      = {Alloy: a language and tool for exploring software designs},
  volume     = {62},
  year       = {2019}
}

@inproceedings{kuhlmann_2011_extensive_validation_ocl,
  address   = {Berlin, Heidelberg},
  author    = {Kuhlmann, Mirco and Hamann, Lars and Gogolla, Martin},
  booktitle = {Objects, {Models}, {Components}, {Patterns}},
  doi       = {10.1007/978-3-642-21952-8_21},
  editor    = {Bishop, Judith and Vallecillo, Antonio},
  isbn      = {978-3-642-21952-8},
  language  = {en},
  pages     = {290--306},
  publisher = {Springer},
  title     = {Extensive {Validation} of {OCL} {Models} by {Integrating} {SAT} {Solving} into {USE}},
  year      = {2011}
}

@inproceedings{meseguer_2026_symbolic_computation_verification,
  author    = {Meseguer, José},
  title     = {Symbolic Computation and Verification Methods in Maude},
  booktitle = {Logic-Based Program Synthesis and Transformation - 34th International Symposium, {LOPSTR} 2025, Rende, Italy, September 9-10, 2025, Proceedings},
  series    = {Lecture Notes in Computer Science},
  year      = {2025},
  publisher = {Springer}
}

@book{clavel_2007_all_about_maude,
  author    = {Manuel Clavel and Francisco Dur{\'a}n and Steven Eker and Patrick Lincoln and Narciso Mart{\'i}-Oliet and Jos{\'e} Meseguer and Carolyn Talcott},
  title     = {All About Maude - A High-Performance Logical Framework: How to Specify, Program, and Verify Systems in Rewriting Logic},
  year      = {2007},
  publisher = {Springer},
  address   = {Berlin, Heidelberg},
  isbn      = {978-3-540-40212-1},
  series    = {Lecture Notes in Computer Science},
  volume    = {4350},
  doi       = {10.1007/978-3-540-40212-1}
}

@inproceedings{BoronatSLE17,
  author    = {Artur Boronat},
  title     = {Structural model subtyping with {OCL} constraints},
  booktitle = {Proceedings of the 10th {ACM} {SIGPLAN} International Conference on
               Software Language Engineering, {SLE} 2017, Vancouver, BC, Canada,
               October 23-24, 2017},
  pages     = {194--205},
  year      = {2017},
  publisher = {{ACM}}
}

@article{boronat_safe_2023,
  title    = {Safe reuse in modelling language engineering using model subtyping with {OCL} constraints},
  volume   = {22},
  issn     = {1619-1374},
  doi      = {10.1007/s10270-022-01028-7},
  language = {en},
  number   = {3},
  journal  = {Software and Systems Modeling},
  author   = {Boronat, Artur},
  month    = jun,
  year     = {2023},
  pages    = {797--818}
}

@inproceedings{riesco_using_2012,
  address   = {Berlin, Heidelberg},
  title     = {Using {Narrowing} to {Test} {Maude} {Specifications}},
  isbn      = {978-3-642-34005-5},
  doi       = {10.1007/978-3-642-34005-5_11},
  language  = {en},
  booktitle = {Rewriting {Logic} and {Its} {Applications}},
  publisher = {Springer},
  author    = {Riesco, Adrián},
  editor    = {Durán, Franciso},
  year      = {2012},
  pages     = {201--220}
}

@article{lpez-rueda_2023_efficient_canonical_narrowing,
  title    = {An efficient canonical narrowing implementation with irreducibility and {SMT} constraints for generic symbolic protocol analysis},
  volume   = {135},
  issn     = {2352-2208},
  doi      = {10.1016/j.jlamp.2023.100895},
  journal  = {Journal of Logical and Algebraic Methods in Programming},
  author   = {López-Rueda, Raúl and Escobar, Santiago and Sapiña, Julia},
  month    = oct,
  year     = {2023},
  pages    = {100895}
}

@inproceedings{escobar_symbolic_2023,
  address   = {New York, NY, USA},
  series    = {{FTSCS} 2023},
  title     = {Symbolic {Analysis} by {Using} {Folding} {Narrowing} with {Irreducibility} and {SMT} {Constraints}},
  isbn      = {979-8-4007-0398-0},
  doi       = {10.1145/3623503.3623537},
  booktitle = {Proceedings of the 9th {ACM} {SIGPLAN} {International} {Workshop} on {Formal} {Techniques} for {Safety}-{Critical} {Systems}},
  publisher = {Association for Computing Machinery},
  author    = {Escobar, Santiago and López-Rueda, Raúl and Sapiña, Julia},
  month     = oct,
  year      = {2023},
  pages     = {14--25}
}

@book{knuthArtComputerProgramming2011,
  title = {The Art of Computer Programming. {{Volume}} 4a, {{Part}} 1: {{Combinatorial}} Algorithms},
  shorttitle = {The Art of Computer Programming. {{Volume}} 4a, {{Part}} 1},
  author = {Knuth, Donald Ervin},
  year = 2011,
  publisher = {Addison-Wesley},
  address = {Boston, Mass. ; London},
  isbn = {978-0-201-03804-0},
  lccn = {005.1}
}

\appendix
\renewcommand*{\theHsection}{\thesection}
\renewcommand*{\theHsubsection}{\thesection.\arabic{subsection}}
\section{Correctness of the MMF Calculus}
\label{sec:app-proofs}

This appendix gives the full statements and proofs for the results summarised in Section~\ref{sec:mmf-proofs}.

\subsection{Soundness and Completeness}
\label{sec:mmf-sound-complete}

\begin{theorem}[Soundness of MMF calculus]
\label{thm:soundness}
If $S_0 = \mathsf{init}(\mathcal{M}_b, \texttt{find})$ and $S_0 \stepRel^{*} S$ with
$\Accept_{\mathtt{find}}(S)$, then $\llbracket S \rrbracket \subseteq \llbracket \mathcal{M}_b \rrbracket$.
\end{theorem}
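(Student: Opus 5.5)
The proof fixes an arbitrary $M=\langle G,\rho\rangle \in \llbracket S \rrbracket$ with $S=\state{\conf}{\phiC}$, so that $G=\pi(\conf)$ and $\rho\models\phiC$. It then checks the four conjuncts of Definition~\ref{def:conformance} one at a time. Two are immediate from $\Accept_{\mathtt{find}}(S)$ (Definition~\ref{def:nf-admissible-accept}). $\mathsf{Admissible}(S)$ gives $G\models\Inv^{\mathsf{part}}_{\mathsf{graph}}$, and the acceptance clause gives $G\models\Inv^{\mathsf{full}}_{\mathsf{graph}}$. Neither depends on $\rho$, since graph invariants are evaluated on the projection. The remaining work is (a) $G:\mathcal{S}_b$ and (b) $\rho\models\Inv_{\mathsf{attr}}(G)$. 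Both need an invariant maintained along the derivation $S_0\stepRel^{*}S$, which I would prove by induction on its length.

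For (b), the key lemma is \emph{constraint accumulation}. Let $\mathsf{Ev}(S)$ be the set of creation and commitment events recorded in $\pi(\conf)$: each object $o:\K$ present, and each committed role value $o.\R=v$. For every state reachable from $S_0$,
\[
\phiC \;\equiv_{\vec{E}\cup B}\; \textstyle\bigwedge_{(o:\K)\in\pi(\conf)} \mathsf{hook}_{\mathsf{create}}(\K,o) \;\wedge\; \bigwedge_{\text{committed } o.\R=v} \mathsf{hook}_{\mathsf{setref}}(\K,\R,o,v).
\]
The base case is $\phi_0=\mathsf{true}$ with an empty graph. The inductive step is a case split on the rule label. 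For \textsc{Obj-Gen} and \textsc{Ref-Choose}, Definition~\ref{def:constrained-step} conjoins exactly the new hook. All other rules leave $\phiC$ unchanged and add no events; this includes \textsc{Ref-Skip}, \textsc{Obj-Skip}, and the phase switch. Here I use the assumption that $\simpB$ and $\downarrow_{\vec{E}/B}$ preserve logical equivalence. I then take $\Inv_{\mathsf{attr}}(G)$ to be, by definition, the conjunction of the hooks over the events of $G$ (the declarative reading of the hook equations). At a normal form every task in the pool has been discharged by \textsc{Ref-Choose}, so every role of every object is committed. Then $\rho\models\phiC$ gives $\rho\models\Inv_{\mathsf{attr}}(G)$.

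For (a), I would prove a structural invariant $\mathsf{WF}$ by the same induction. It has four parts. First, $\pi(\conf)$ contains one record per identifier in $\mathsf{takenIds}$, and identifiers are unique because they are consumed from $\freshIds{\cdot}$. Second, $\missing{\K}{n}$ records the residual lower-bound obligation, so $n=\ell_\K\monus\#_\K(\pi(\conf))$, and allocation never exceeds $u_\K$ because the pool $\os$ contains exactly $u_\K$ identifiers per class. Third, every pending task's rank interval lies inside $[0,|\mathcal{D}_{\lb,\ub}(\os)|-1]$. Fourth, each committed value lies in $\mathcal{D}_{\lb,\ub}(\os)$ over the restricted universe of allocated identifiers. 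The normal-form analysis then goes as follows. Being in $\NF$ forces $\freshIds{\nil}$; otherwise \textsc{Obj-Gen} would still apply, since $\freshIds{\cdot}$ is nonempty and a creation is always possible while identifiers remain. It also forces an empty $\refPool{\cdot}$; otherwise \textsc{Ref-Choose} would apply. Obj-Skip is blocked until $\missing{\K}{0}$ holds, and the cursor only empties by consuming identifiers. Together these give $\ell_\K\le\#_\K\le u_\K$. Role multiplicities in $[\lb,\ub]$ then follow from the fact that $\choiceRef$ returns an element of $\mathcal{D}_{\lb,\ub}(\os)$ whenever $\mathsf{curr}$ is within range, which is guaranteed by the rank-interval invariant. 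Typing conformance $G:\mathcal{S}$ follows from $\initAttrs$ and $\setRef$ being sort-preserving, with targets drawn from the class-indexed set for $\K'$.

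The main obstacle is the interaction between the phase switch and the reference domain. Tasks are created during $\objectBuild$ with $\mathit{REM}=U_{\max}$ computed over the full pool $\os$. After restriction of $\allIds{\cdot}$ to $\mathsf{takenIds}$, the domain may be smaller, and a rank could fall outside it. I would first argue, via the intrinsic bounds-consistency component of $\Inv^{\mathsf{part}}_{\mathsf{graph}}$ and the $\mathsf{need}$ obligations, that enough targets exist for every lower multiplicity. Where $\mathsf{unrank}$ goes out of range, I would have $\choiceRef$ yield an ill-sorted or placeholder value that triggers the partial error predicate. Admissibility of the accepted $S$ then excludes such values, so every surviving committed value lies in the intended $\mathcal{D}_{\lb,\ub}$.
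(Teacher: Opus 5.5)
Your proposal follows the paper's route. You carry a runtime invariant along $S_0 \stepRel^{*} S$: structural well-formedness plus correct accumulation of the $\mathsf{hook}_\ell$ contributions in $\phiC$, which is essentially the paper's $\mathsf{WF}$. You then read off the conformance conjuncts at the accepted normal form, using $\phiC \Rightarrow \Inv_{\mathsf{attr}}(\pi(\conf))$ for the data part. In several places you are more careful than the paper's sketch:
\begin{itemize}
\item You quantify over every $\rho \models \phiC$. The paper only picks one witness, and that alone does not give the inclusion.
\item You derive the class bounds from the $\mathsf{missing}$/$\mathsf{need}$ guards on \textsc{Obj-Skip}.
\item You use the $\mathsf{need}$ obligations together with the bounds-consistency clause of $\Inv^{\mathsf{part}}_{\mathsf{graph}}$ to show that each restricted role domain is nonempty. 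Hence an accepted normal form really has an empty task pool.
\end{itemize}

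The step that does not go through as written is your last one. You propose that $\choiceRef$ return an ill-sorted or placeholder value on out-of-range ranks. That is a modification of the calculus, not something you can derive from it. The detection you rely on is also absent. The intrinsic part of $\Inv^{\mathsf{part}}_{\mathsf{graph}}$ checks only identifier uniqueness, upper multiplicities and bounds consistency. So a placeholder such as $\emptyset$ on a role with $\lb \ge 1$ passes admissibility yet violates $G : \mathcal{S}_b$. Nothing in the paper's error predicates is specified to flag an ill-sorted value either.

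The paper never confronts the rank/universe mismatch you identified. Its proof just asserts that \textsc{Ref-Choose} commits values from the bounded universe. That is exactly what Assumption~\ref{ass:finite-enum} supplies: each task's rank interval ranges, without repetition, over the admissible domain induced by $\mathsf{boundsDecl}$ and $\allIds{\os}$. Close your argument by invoking that assumption for the restricted universe in force during $\refBuild$, or by adding it as an explicit hypothesis. Your rank-interval invariant then holds, and with it the multiplicity clause of $G : \mathcal{S}_b$, without altering $\choiceRef$.
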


\begin{proof}
The proof relies on the fact that MMF construction preserves a runtime validity invariant.
The \emph{object generation phase} (Item 1) preserves boundedness constraints because \textsc{Obj-Gen} consumes one preallocated identifier and adds a valid object skeleton, while \textsc{Obj-Skip} is guarded by lower-bound obligations.
All attribute constraints triggered by creation are accumulated in $\phiC$ via the $\mathsf{hook}_{\mathsf{create}}$ mechanism.
The \emph{reference generation phase} (Item 2) respects role domains because \textsc{Ref-Choose} commits values from the bounded universe defined by the finite sort inhabitants, and it concurrently accumulates reference constraints in $\phiC$ via $\mathsf{hook}_{\mathsf{setref}}$.
Finally (Item 3), an accepted normal form $S$ implies $\NF(S)$ and $\SAT(\phiC)$.
If we pick a witness valuation $\rho$ such that $\rho \models \phiC$, the constrained model instance $\langle \pi(\conf), \rho \rangle$ satisfies all structural invariants (by $\Accept_{\mathtt{find}}(S)$) and all data invariants (because $\models (\phiC \Rightarrow \Inv_{\mathsf{attr}}(\pi(\conf)))$).
Thus, $\langle \pi(\conf), \rho \rangle$ is a valid member of the semantics that conforms to $\mathcal{M}_b$.
\end{proof}

We say that an accepted state $S$ is \emph{retained} if it belongs to the covering antichain $\mathcal{A}$ computed by the engine (see Section~\ref{sec:mmf-subsumption}).

\begin{theorem}[Completeness of MMF calculus]
\label{thm:completeness}
For every constrained model instance $\langle G, \rho \rangle \in \llbracket \mathcal{M}_b \rrbracket$, there exists a retained accepted state $S$ reachable from $\mathsf{init}(\mathcal{M}_b, \texttt{find})$ such that $\langle G, \rho \rangle \in \llbracket S \rrbracket$.
\end{theorem}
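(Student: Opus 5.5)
We prove the statement in three stages: construct a run of the calculus that rebuilds $G$, show that the state it ends in is accepted and contains $\langle G,\rho\rangle$ in its denotation, and show that pruning and folding never discard this instance from the retained antichain.

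\textbf{Stage 1: a guided run that rebuilds $G$.} Fix $\langle G,\rho\rangle \in \llbracket \mathcal{M}_b \rrbracket$. Object identifiers are uninterpreted symbols and configurations are taken modulo ACU. So we first rename the objects of $G$ class by class onto an initial segment of the per-class identifier pool in the canonical linearisation $\ol$. Such a pool exists because $G:\mathcal{S}_b$ gives $|G|_C \le u_C$. We then argue that this renamed graph is the one meant, i.e.\ that membership in the denotation is up to the identifier renaming absorbed by the fixed pool; this point has to be stated explicitly.

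During $\objectBuild$ we walk down $\ol$. We apply \textsc{Obj-Gen} to each identifier used by $G$ and \textsc{Obj-Skip} to the others. \textsc{Obj-Skip} is enabled exactly when $\missing{\K}{0}$ and $\need{\K}{0}$ hold. Since allocation is prefix-first within each class, every skip comes after all creations of that class. At that point the counters have reached $0$: $\missing{\K}{\cdot}$ does because $|G|_C \ge \ell_C$, and $\need{\K}{\cdot}$ does because $|G|_C \ge \mathrm{inNeed}(C)$, which follows from the lower role multiplicities satisfied in $G$ together with bound consistency in $\Inv^{\mathsf{part}}_{\mathsf{graph}}$.

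In $\refBuild$, consider each task $\refIdx{o}{\R}{\K'}{0}{U_{\max}}$. The value $o.\R$ in $G$ has cardinality in $[\ell_\R,u_\R]$, so it lies in $\mathcal{D}_{\lb,\ub}(\os)$. Because $\mathsf{rankedFromTo}$ is a bijection from $[0,U_{\max}]$ onto $\mathcal{D}$ (by correctness of combinatorial unranking), the value has a rank $j$. We apply \textsc{Ref-Skip} $j$ times and then \textsc{Ref-Choose}. The result is a normal form $S^\ast$ with $\pi(\conf^\ast)=G$, and its constraint $\phiC^\ast$ is the conjunction of all create-hooks and setref-hooks instantiated on $G$.

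\textbf{Stage 2: acceptance and denotation.} We need $\rho \models \phiC^\ast$. For this we invoke the adequacy assumption already used in the proof of Theorem~\ref{thm:soundness}, but in the converse direction: $\Inv_{\mathsf{attr}}(G)$ is equivalent to the conjunction of hooks fired along any complete construction of $G$. Given that, $\rho\models\Inv_{\mathsf{attr}}(G)$ yields $\rho\models\phiC^\ast$, and also $\rho\models\phiC_i$ for every prefix state, since each $\phiC_i$ is a subconjunct. Because $\simpB$ and $\downarrow_{\vec E/B}$ preserve logical equivalence, $\SAT(\phiC_i)$ holds along the whole run. Every intermediate graph $\pi(\conf_i)$ is a subgraph of $G$. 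Monotonicity of $\Inv^{\mathsf{part}}_{\mathsf{graph}}$ together with $G\models\Inv^{\mathsf{part}}_{\mathsf{graph}}$ then gives admissibility of every prefix, so \textsc{Prune} never removes the run. Finally $G\models\Inv^{\mathsf{full}}_{\mathsf{graph}}$, so $\Accept_{\mathtt{find}}(S^\ast)$ holds and $\langle G,\rho\rangle\in\llbracket S^\ast\rrbracket$. \textsc{FilterExact} only drops exact duplicates, so some node equal to each $S_i$ is enqueued.

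\textbf{Stage 3: retention under folding.} Folding acts only on normal forms. Since $S^\ast$ is a normal form, intermediate states are never folded. If \textsc{FoldSift} discards $S^\ast$, there is some $R\in\mathcal{A}$ with $\mathtt{subsumes}(R,S^\ast)$. Soundness of $\mathtt{subsumes}$ (a matching $\sigma$ with $\pi(\conf^\ast)=_{E\cup B}\pi(\conf_R)\sigma$ and SMT entailment $\phiC^\ast\Rightarrow\phiC_R\sigma$) gives $\llbracket S^\ast\rrbracket\subseteq\llbracket R\rrbracket$, up to the same renaming convention as in Stage 1. If instead $S^\ast$ is kept, it enters $\mathcal{A}$.

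A later \textsc{FoldRefresh} can evict an element only in favour of a new element that subsumes it. By transitivity of $\subseteq$, coverage is preserved, and induction on the engine's iterations gives the invariant that every accepted normal form ever enqueued is covered by some member of the current $\mathcal{A}$. Termination of the engine (the appendix's lexicographic measure) lets us read this invariant off the final antichain. The covering element is reachable and accepted, since $\mathcal{A}$ contains only accepted normal forms that were enqueued.

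\textbf{Main obstacle.} We expect the hardest step to be Stage 2's use of hook adequacy in the direction $\Inv_{\mathsf{attr}}(G)\Rightarrow\phiC^\ast$. Soundness needed only the other direction, so we would state both directions as an explicit assumption linking $\Inv_{\mathsf{attr}}$ to the hooks. A secondary subtlety is the identifier-renaming convention in Stages 1 and 3.
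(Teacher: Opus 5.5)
You follow the paper's route: a guided \textsc{Obj-Gen}/\textsc{Obj-Skip} allocation, then $j$ applications of \textsc{Ref-Skip} followed by \textsc{Ref-Choose} per task, then $\rho\models\phiC$ from $\rho\models\Inv_{\mathsf{attr}}(G)$, then coverage by folding. The points you make explicit are all used silently by the paper's proof: the renaming onto pool prefixes, the converse direction of hook adequacy, survival of the run under \textsc{Prune} and \textsc{FilterExact}, and the antichain invariant.

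One step, however, fails as written: your justification that the $\mathsf{need}$ counter reaches $0$. You claim $|G|_C\ge\mathrm{inNeed}(C)$ follows from the lower role multiplicities satisfied in $G$ together with bound consistency. But bound consistency (item (iii) of $\Inv^{\mathsf{part}}_{\mathsf{graph}}$) only constrains the scope $u_C$, not the actual count $|G|_C$. Likewise, a lower multiplicity $\ell_r$ of a role $r:D\to C$ forces $|G|_C\ge\ell_r$ only when $G$ contains some $D$-object.

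Here is a counterexample. Take $D$ with scope $[0,1]$, $C$ with scope $[0,2]$, and a single role $r:D\to C$ with multiplicity $[2,2]$. The bounds are consistent ($u_C=2=\mathrm{inNeed}(C)$), and the graph with one $C$-object and no $D$-object lies in $\llbracket\mathcal{M}_b\rrbracket$, since the role condition is vacuous. Yet $\conf_0$ contains $\need{C}{2}$, and after one \textsc{Obj-Gen} on $C$ this becomes $\need{C}{1}$. So \textsc{Obj-Skip} on the second $C$-identifier is disabled, and every run allocates two $C$-objects. No reachable accepted state denotes this instance, even up to renaming.

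A better argument cannot close this gap: the statement itself needs an extra hypothesis. One option is to require $|G|_C\ge\mathrm{inNeed}(C)$ for every class. Another, which makes your inference valid, is that every role with $\ell_r>0$ has a source class with $\ell_D\ge 1$. Then $G$ contains a $D$-object whose $\ell_r$ distinct $C$-targets witness $|G|_C\ge\ell_r$. The paper's proof does not expose this because it never checks the \textsc{Obj-Skip} guards. It also ignores the $\missing{\K}{0}$ guard, which your prefix renaming correctly handles.

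A smaller point concerns Stage 3. The engine applies folding to normal-form successors before the acceptance test, so you cannot assume $\mathcal{A}$ contains only accepted states. Argue instead that $\emptyset\neq\llbracket S^\ast\rrbracket\subseteq\llbracket R\rrbracket$ forces $\pi(\conf_R)=\pi(\conf^\ast)$ and $\rho\models\phiC_R$. Hence $R$, being a normal form, is accepted.
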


\begin{proof}
The completeness argument constructs a derivation guided by the target graph $G$.
First, since the bounds $b$ preallocate sufficient identifiers, we can select a sequence of \textsc{Obj-Gen} and \textsc{Obj-Skip} steps along the canonical identifier order that allocates exactly the identifier set of $G$.
Second, for every role assignment in $G$, let $v$ be the specific set of target identifiers assigned to role $\R$.
By the definition of the reference phase (Section~\ref{sec:mmf-semantics}), the domain of valid assignments is the finite set $\mathcal{D}_{\lb,\ub}(\os)$ comprising all subsets of the available universe within the multiplicity bounds.
Since MMF imposes a canonical total order on this domain (via lexicographic unranking) and the search interval covers the entire domain, there exists a unique index $k$ such that $\mathsf{unrank}(P, \dots, k) = v$.
The sequence of \textsc{Ref-Skip} operations (which increment the cursor) followed by \textsc{Ref-Choose} (which selects the current cursor) guarantees that the specific index $k$ is reachable, thus committing to the exact target values of $G$.
Because $\langle G, \rho \rangle$ is a valid constrained model instance, $\rho$ satisfies $\Inv_{\mathsf{attr}}(G)$, and the final constraint $\phiC$ is satisfied by $\rho$ (i.e., $\rho \models \phiC$).
Thus, there exists a reachable accepted state $S_{exact}$ such that $\langle G, \rho \rangle \in \llbracket S_{exact} \rrbracket$.
Finally, by the coverage soundness of the folding strategy (Section~\ref{sec:mmf-subsumption}), the engine maintains a covering antichain $\mathcal{A}$ such that for any such reachable $S_{exact}$, there exists a retained representative $S \in \mathcal{A}$ with $S \preceq S_{exact}$.
By definition of $\preceq$, this implies $\llbracket S_{exact} \rrbracket \subseteq \llbracket S \rrbracket$, and therefore $\langle G, \rho \rangle \in \llbracket S \rrbracket$.
\end{proof}

\subsection{Termination of the Bounded Construction}
\label{sec:mmf-termination}

Fix a bounded model specification $\mathcal{M}_b$.
Assume:

\begin{assumption}[Well-defined normalization on reachable ground states]
\label{ass:eq-well-defined}
Let $E$ be terminating and confluent modulo $B$ on the class of ground terms reachable
from the initial state $S_0$ by $\stepRel$.
Moreover, all auxiliary defined functions
used by the construction rules (including graph-update helpers, task constructors, and
$\simpB$) are ground-sufficiently complete on those reachable arguments, so that post-step
canonicalization yields a unique $E/B$-normal form.
\end{assumption}

\begin{assumption}[Finite bounded enumeration]
\label{ass:finite-enum}
For each reference-task token in $\refPool{\rtList}$, the corresponding admissible value
domain induced by $\mathsf{boundsDecl}$ and $\allIds{\os}$ is finite, and the task interval
stored in the token ranges over that domain without repetition.
\end{assumption}

The calculus manipulates \emph{partial} graphs, so we do not aim to preserve global
constraints at every intermediate step.
Instead, we preserve a single runtime notion
of well-formedness that captures (i) intrinsic partial correctness and (ii) correct
accumulation of the constraint.

\begin{definition}[Model well-formedness]
\label{def:wf}
Define $\mathsf{WF}(\state{\conf}{\phiC})$ as
\(
\mathsf{OkConf}(\conf) \;\wedge\; \mathsf{OkConstraint}(\conf, \phiC),
\)
where:
\begin{enumerate}
    \item $\mathsf{OkConf}(\conf)$ states that $\conf$ is a well-typed configuration intrinsically consistent with the schema.
    Operationally, it requires that $\mathsf{specErrorPredGraphPartialBuiltin}(\conf)$ is false (ensuring unique object identifiers and structural bound consistency) and that $\mathsf{specErrorPredGraphPartial}(\conf)$ is false (ensuring monotonic user-defined graph constraints).
    It also implies that the presence of obligation tokens is synchronized with the construction phase.
    \item $\mathsf{OkConstraint}(\conf, \phiC)$ states that $\phiC$ is an $E/B$-canonical Boolean constraint over attribute variables keyed by the (possibly symbolic) object identifiers allocated in $C$.
    Formally, it equals (modulo $\simpB$ and $E/B$) the conjunction of all $\mathsf{hook}_\ell$ constraints implied by the construction events in $\conf$.
\end{enumerate}
\end{definition}

\begin{remark}[Partial states and global invariants]
\label{rem:partial-states}
Intermediate MMF states are intended to be \emph{partial} and are not required to satisfy
non-monotonic global constraints.
Therefore, we cannot claim that every rule application yields a valid model.
Instead, we prove that rules preserve \emph{runtime well-formedness} ($\mathsf{WF}$), which ensures intrinsic structural consistency and the correct accumulation of constraint obligations.
\end{remark}

Next, we define a decreasing measure that ensures termination.

\begin{definition}[Bounded progress measure]
\label{def:phase-measure}
Define a measure $\mu(\conf)$ on runtime configurations by a lexicographic pair:
\[
\mu(\conf) \;=
\bigl(\;|\freshIds{\conf}|,\;\;\sum_{\rt \in \refPool{\conf}} \mathsf{width}(\rt)\;\bigr)
\in \mathbb{N} \times \mathbb{N},
\]
where $|\freshIds{\conf}|$ is the length of the identifier pool and
$\mathsf{width}(\rt)$ is the remaining interval size for a task $\rt$
(e.g., $\mathit{REM} + 1$ for $\rt = \refIdx{o}{r}{K'}{\mathsf{curr}}{\mathit{REM}}$).
\end{definition}

\begin{lemma}[Termination of $\stepRel$ on MMF states]
\label{lem:termination}
Every $\stepRel$-derivation from $S_0$ is finite.
\end{lemma}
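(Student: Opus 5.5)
The plan is to show that the measure $\mu$ of Definition~\ref{def:phase-measure} strictly decreases, in the lexicographic order on $\mathbb{N}\times\mathbb{N}$, along every constrained step $S \stepRel S'$. Since this order is well-founded, no infinite derivation from $S_0$ can exist. Two preliminary facts are needed. First, by Assumption~\ref{ass:eq-well-defined}, each step is well defined. The post-step normalization $\downarrow_{\vec{E}/B}$ terminates, and so does the $\simpB$ canonicalization of $\phiC$. Each step therefore yields a unique canonical successor, and the components of $\mu$ are read off that canonical form. Second, the constraint component $\phiC$ does not occur in $\mu$, so strengthening it is irrelevant. Termination then reduces to a case analysis over the rule families \textsc{Obj-Gen}, \textsc{Obj-Skip}, \textsc{Ref-Choose}, \textsc{Ref-Skip}, plus the phase-switch step from $\objectBuild$ to $\refBuild$.

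\textbf{Object phase.} Both \textsc{Obj-Gen} and \textsc{Obj-Skip} match $\freshIds{\obj_\K :: \ol}$ and produce $\freshIds{\ol}$, so the first component drops by exactly one. \textsc{Obj-Gen} also enlarges the reference pool via $\mathsf{addTasks}$, which raises the second component. This is harmless because the order is lexicographic. The added width is finite: each new task has width $U_{\max}+1$, which is finite by Assumption~\ref{ass:finite-enum}, and $\mathsf{roleList}(\K)$ is finite.

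\textbf{Phase switch.} The phase switch fires only when the cursor is $\freshIds{\nil}$, and it leaves $\freshIds{\cdot}$ empty. It restricts $\allIds{\os}$ to $\mathsf{takenIds}$, but this does not alter the task widths, so naively $\mu$ is unchanged. I would handle this by one of two arguments:
\begin{itemize}
\item extend the measure with a third lexicographic component, a phase bit (1 for $\objectBuild$, 0 for $\refBuild$), placed after the first component; or
\item observe that the switch is guarded by $\status{\objectBuild}$ and rewrites it to $\status{\refBuild}$, so it can fire at most once on any derivation.
\end{itemize}
The second is the simpler route. No rule re-creates $\status{\objectBuild}$, so the switch occurs at most once. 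It then suffices to prove finiteness of the two segments before and after it separately.

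\textbf{Reference phase.} Here the first component is constantly $0$, by phase synchronization from $\mathsf{OkConf}$. \textsc{Ref-Skip} replaces a task of width $\mathit{REM}+1$ by one of width $\mathit{REM}$; it requires $\mathit{REM}>0$, so the width remains a natural number. The sum therefore drops by one. \textsc{Ref-Choose} removes a task of width at least $1$ from the pool, so the sum drops by at least one. It also updates an attribute via $\setRef$ and conjoins a hook, but neither touches $\freshIds{\cdot}$ or the pool. One point must be checked: the canonical forms of $\mathsf{addTasks}$, $\setRef$ and $\choiceRef$ must not spuriously add tasks or identifiers. This follows from the ground sufficient completeness in Assumption~\ref{ass:eq-well-defined}, together with the fact that the only equations producing $\mathsf{refIdx}$ tokens are those of $\mathsf{addTasks}$, which is invoked only by \textsc{Obj-Gen}.

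\textbf{Main obstacle.} I expect the main difficulty to be rigour at the level of normalization, rather than the rule-by-rule arithmetic. The measure must be shown invariant under $\vec{E}/B$-normalization and ACU matching, so that the multiset pool sum is well defined modulo $B$. Any implementation equations, in particular the phase-switch and filtering equations, must be shown to be non-increasing on $\mu$. I would handle this by noting that $\mu$ is defined on $B$-equivalence classes, since sums over multisets are ACU-invariant. I would then state explicitly, as part of Assumption~\ref{ass:eq-well-defined}, that the auxiliary equations neither lengthen $\freshIds{\cdot}$ nor enlarge the pool outside \textsc{Obj-Gen}.
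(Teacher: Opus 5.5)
Your proposal is correct and follows the paper's proof: the same lexicographic measure $\mu$, with \textsc{Obj-Gen}/\textsc{Obj-Skip} decreasing the first component, \textsc{Ref-Skip}/\textsc{Ref-Choose} decreasing the task-width sum, and termination from well-foundedness of $\mathbb{N}\times\mathbb{N}$. Your extra handling of the $\objectBuild$-to-$\refBuild$ switch, which leaves $\mu$ unchanged and which the paper's argument does not mention, and of post-step normalization tightens the same argument rather than changing its route.
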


\begin{proof}
By construction, each object-phase rule strictly decreases $|\freshIds{\cdot}|$ by consuming one identifier (\textsc{Obj-Gen} or \textsc{Obj-Skip}).
Each reference-phase \textsc{Ref-Skip} decreases $\mathsf{width}(\rt)$ for the head task, and \textsc{Ref-Choose} removes one task from the pool.
By Assumption~\ref{ass:finite-enum}, each task has finite width.
Hence $\mu(\conf)$ strictly decreases in the lexicographic order on $\mathbb{N} \times \mathbb{N}$, so no infinite derivation exists.
\end{proof}

\begin{lemma}[One-step preservation of well-formedness]
\label{lem:preservation}
If $\mathsf{WF}(S)$ and $S \stepRel S'$, then $\mathsf{WF}(S')$.
\end{lemma}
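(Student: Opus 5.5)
We argue by case analysis on the rule label $\ell$ of the step $S \stepRel S'$, where $S=\state{\conf}{\phiC}$ and $S'=\state{\conf'}{\phiC'}$. The labels are \textsc{Obj-Gen}, \textsc{Obj-Skip}, the phase switch from $\objectBuild$ to $\refBuild$, \textsc{Ref-Choose} and \textsc{Ref-Skip}. In each case we verify the two conjuncts of Definition~\ref{def:wf} separately. Throughout, Assumption~\ref{ass:eq-well-defined} guarantees that $\conf' = \conf_1{\downarrow_{\vec{E}/B}}$ and $\phiC'$ are unique canonical forms. Hence "well-typed" and "canonical" transfer from the raw result to the normalized one, and we may reason on the raw right-hand sides of the rules modulo $B$.

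\emph{Constraint component.} By Definition~\ref{def:constrained-step}, $\phiC' = \simpB(\phiC \wedge \mathsf{hook}_\ell(\sigma)){\downarrow_{\vec{E}/B}}$. This is $E/B$-canonical by Assumption~\ref{ass:eq-well-defined}. By the induction hypothesis $\mathsf{OkConstraint}(\conf,\phiC)$, the constraint $\phiC$ is the conjunction of the hooks of the construction events recorded in $\conf$. The events of $\conf'$ are exactly those of $\conf$ plus the single event fired by $\ell$. For \textsc{Obj-Gen} this is the creation of $\obj_\K$; for \textsc{Ref-Choose} it is the commitment of $v$ to $\R$ of $\obj_\K$; for the remaining rules there is no event and $\mathsf{hook}_\ell(\sigma)=\mathsf{true}$. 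So $\phiC'$ is the conjunction of the hooks of $\conf'$ modulo $\simpB$ and $E/B$. The only point to check is that the new atoms mention only identifiers allocated in $\conf'$. For \textsc{Obj-Gen} the variables $i(\obj_\K,\cdot)$ refer to the object just added. For \textsc{Ref-Choose}, $v$ is drawn from $\allIds{\os}$, which the phase switch has restricted to $\mathsf{takenIds}$.

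\emph{Configuration component.} For the intrinsic part ($\mathsf{specErrorPredGraphPartialBuiltin}$ false) we proceed rule by rule. \textsc{Obj-Gen} takes $\obj_\K$ from the head of $\freshIds{\cdot}$; since $\ol$ is a linearization without repetition and the consumed identifier moves into $\mathsf{takenIds}$, the new identifier is not in $\pi(\conf)$, so uniqueness is preserved. $\initAttrs$ yields empty or placeholder references, so no upper multiplicity is exceeded. \textsc{Obj-Skip}, \textsc{Ref-Skip} and the phase switch leave $\pi(\conf)$ unchanged, so the intrinsic checks are invariant. For \textsc{Ref-Choose}, $v=\choiceRef(\ldots)\in\mathcal{D}_{\lb,\ub}(\os)$, so $|v|\le \ub$ and $v\subseteq\os\subseteq\mathsf{takenIds}$. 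Thus neither the upper bounds nor the no-dangling condition can be broken, and $\setRef$ replaces only the $\R$ slot of $\obj_\K$. Bounds consistency is a property of $b$ alone and is unaffected by any step. For phase synchronization of the tokens, we note that only the phase switch changes the marker. It fires exactly when $\freshIds{\nil}$ holds. \textsc{Ref-Choose} and \textsc{Ref-Skip} only pop or rewrite the head task of $\refPool{\cdot}$, with $\mathit{REM}>0$ guarding \textsc{Ref-Skip}, so $\mathsf{width}$ stays well defined by Assumption~\ref{ass:finite-enum}.

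The step I expect to be the main obstacle is the user-defined conjunct: showing that $\mathsf{specErrorPredGraphPartial}(\conf')$ is false. Monotonicity alone only gives the converse: once the error fires it persists. A step that adds an object or commits an edge could, in principle, create a fresh violation. My first attempt is to read the rules as the implementation realises them. There, the rules that extend $\pi(\conf)$, namely \textsc{Obj-Gen} and \textsc{Ref-Choose}, are conditional on the partial error predicate not reducing to $\mathsf{tt}$ on the result. The other rules do not change $\pi(\conf)$, so the conjunct transfers directly from $\mathsf{WF}(S)$. If that reading is not available, I would state explicitly that $\stepRel$ is taken with this guard built in, as in Definition~\ref{def:nf-admissible-accept}. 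Under that reading the claim follows for every successor, and the case analysis above completes the proof.
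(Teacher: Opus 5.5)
Your case analysis is the paper's own: you go rule by rule and check $\mathsf{OkConf}$ and $\mathsf{OkConstraint}$ separately, with more detail than the paper. Covering the phase switch explicitly is a plus, since the paper's proof never mentions it.

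You diverge on the user-defined conjunct $\mathsf{specErrorPredGraphPartial}(\conf')$, and there you have found a real hole that the paper's proof passes over. The paper justifies $\mathsf{OkConf}$ after \textsc{Obj-Gen} and \textsc{Ref-Choose} only by freshness of the identifier and by $v$ lying in the allocated universe within the multiplicity bounds. That covers the built-in checks and nothing more. For the unguarded relation of Definition~\ref{def:constrained-step} the conjunct genuinely fails. Add the monotonic pattern ``an employee is its own manager''. Since $\{o\} \in \mathcal{D}_{0,1}(\os)$ for every allocated employee $o$, some \textsc{Ref-Skip} steps followed by \textsc{Ref-Choose} take a $\mathsf{WF}$ state to one that triggers it. So no proof closes this case for the lemma as stated. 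What you prove is necessarily a restricted statement, and the restriction has to be chosen with care.

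Neither of your two readings is quite right.
\begin{itemize}
\item The first is contradicted by Section~\ref{sec:mmf-exploration}. There \textsc{Step} produces raw successors and a separate \textsc{Prune} stage discards those violating $\Inv^{\mathsf{part}}_{\mathsf{graph}}$, so the rules themselves are unguarded.
\item The second is not what Definition~\ref{def:nf-admissible-accept} does: $\Inv^{\mathsf{part}}_{\mathsf{graph}}$ enters $\mathsf{Admissible}$ and $\Accept_{\mathtt{find}}$, not $\stepRel$. Building the guard into $\stepRel$ also changes $\NF$. A state whose successors are all pruned (e.g.\ one with an undischarged reference task, or a pending $\missing{\K}{n}$ with $n>0$) would become a normal form and could be accepted. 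Then $\NF$ would no longer signal completed construction, which Theorem~\ref{thm:soundness} tacitly relies on.
\end{itemize}
The clean repair keeps $\stepRel$ unchanged and does one of two things. Either add the hypothesis $\pi(\conf') \models \Inv^{\mathsf{part}}_{\mathsf{graph}}$ (i.e., $S'$ survives \textsc{Prune}). Or drop the user-defined conjunct from $\mathsf{WF}$ and use monotonicity only to justify pruning. Either way, the rest of your argument goes through.

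Finally, your uniqueness and no-dangling arguments use facts that are not literally in $\mathsf{WF}$. One is that the identifiers in $\freshIds{\cdot}$ are repetition-free and disjoint from those of $\pi(\conf)$. The other is that in phase $\refBuild$, $\allIds{\os}$ is already restricted to $\mathsf{takenIds}$. Either fold these into the token-synchronisation clause of $\mathsf{OkConf}$, or state the lemma for states reachable from $S_0$. The paper relies on them just as implicitly.
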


\begin{proof}
Proceed by cases on the applied rule schema, grouped by phase.
\begin{enumerate}
\item \emph{Object generation rules.}
For \textsc{Obj-Gen}, $\mathsf{OkConf}$ is preserved because a fresh identifier is consumed and a well-typed object skeleton is added.
The new constraint is exactly $\simpB(\phiC \wedge \mathsf{hook}_{\mathsf{create}}(\cdots))$ by definition of lifting, so $\mathsf{OkConstraint}$ holds.
For \textsc{Obj-Skip}, only the identifier pool changes and $\mathsf{OkConf}$ is preserved.

\item \emph{Reference generation rules.}
For \textsc{Ref-Choose}, $\mathsf{OkConf}$ is preserved because the committed value is drawn from the allocated universe and respects multiplicity bounds.
The lifted update with $\mathsf{hook}_{\mathsf{setref}}(\cdots)$ preserves $\mathsf{OkConstraint}$.
For \textsc{Ref-Skip}, $\mathsf{OkConf}$ is unchanged and the task interval shrinks consistently.
\end{enumerate}

Assumptions~\ref{ass:eq-well-defined}--\ref{ass:finite-enum} justify that the post-step canonicalization used in the definition of $\stepRel$ terminates and yields the intended constructor representation on reachable ground terms.
\end{proof}

\end{document}